\tikzstyle{vecArrow} = [thick, decoration={markings,mark=at position
\tikzstyle{innerWhite} = [semithick, white,line width=1.4pt, shorten >= 4.5pt]
\newtheorem{definition}{Definition}
\newtheorem{proposition}[definition]{Proposition}
\newtheorem{lemma}[definition]{Lemma}
\newtheorem{theorem}[definition]{Theorem}
\newtheorem{corollary}[definition]{Corollary}
\newtheorem{conjecture}[definition]{Conjecture}
\newtheorem{remark}[definition]{Remark}
\newtheorem{example}[definition]{Example}
\newtheorem{question}[definition]{Question}
\def\bcj{\begin{conjecture}}
	\def\ecj{\end{conjecture}}
\def\bcr{\begin{corollary}}
	\def\ecr{\end{corollary}}
\def\bd{\begin{definition}}
	\def\ed{\end{definition}}
\def\bea{\begin{eqnarray}}
\def\eea{\end{eqnarray}}
\def\bem{\begin{enumerate}}
	\def\eem{\end{enumerate}}
\def\bex{\begin{example}}
	\def\eex{\end{example}}
\def\bim{\begin{itemize}}
	\def\eim{\end{itemize}}
\def\bl{\begin{lemma}}
	\def\el{\end{lemma}}
\def\bma{\begin{bmatrix}}
	\def\ema{\end{bmatrix}}
\def\bpf{\begin{proof}}
	\def\epf{\end{proof}}
\def\bpp{\begin{proposition}}
	\def\epp{\end{proposition}}
\def\bqu{\begin{question}}
	\def\equ{\end{question}}
\def\br{\begin{remark}}
	\def\er{\end{remark}}
\def\bt{\begin{theorem}}
	\def\et{\end{theorem}}
\def\squareforqed{\hbox{\rlap{$\sqcap$}$\sqcup$}}
\def\qed{\ifmmode\squareforqed\else{\unskip\nobreak\hfil
		\penalty50\hskip1em\null\nobreak\hfil\squareforqed
		\parfillskip=0pt\finalhyphendemerits=0\endgraf}\fi}
\def\endenv{\ifmmode\;\else{\unskip\nobreak\hfil
		\penalty50\hskip1em\null\nobreak\hfil\;
		\parfillskip=0pt\finalhyphendemerits=0\endgraf}\fi}
\newenvironment{proof}{\noindent \textbf{{Proof.~} }}{\qed}
\def\Dbar{\leavevmode\lower.6ex\hbox to 0pt
	{\hskip-.23ex\accent"16\hss}D}
\def\url@leostyle{%
	\@ifundefined{selectfont}{\def\UrlFont{\sf}}{\def\UrlFont{\small\ttfamily}}}
\def\bcj{\begin{conjecture}}
	\def\ecj{\end{conjecture}}
\def\bcr{\begin{corollary}}
	\def\ecr{\end{corollary}}
\def\bd{\begin{definition}}
	\def\ed{\end{definition}}
\def\bea{\begin{eqnarray}}
\def\eea{\end{eqnarray}}
\def\bem{\begin{enumerate}}
	\def\eem{\end{enumerate}}
\def\bex{\begin{example}}
	\def\eex{\end{example}}
\def\bim{\begin{itemize}}
	\def\eim{\end{itemize}}
\def\bl{\begin{lemma}}
	\def\el{\end{lemma}}
\def\bpf{\begin{proof}}
	\def\epf{\end{proof}}
\def\bpp{\begin{proposition}}
	\def\epp{\end{proposition}}
\def\bqu{\begin{question}}
	\def\equ{\end{question}}
\def\br{\begin{remark}}
	\def\er{\end{remark}}
\def\bt{\begin{theorem}}
	\def\et{\end{theorem}}
\def\btb{\begin{tabular}}
	\def\etb{\end{tabular}}
\newcommand{\nc}{\newcommand}
\nc{\bbA}{\mathbb{A}} \nc{\bbB}{\mathbb{B}} \nc{\bbC}{\mathbb{C}}
\nc{\bbD}{\mathbb{D}} \nc{\bbE}{\mathbb{E}} \nc{\bbF}{\mathbb{F}}
\nc{\bbG}{\mathbb{G}} \nc{\bbH}{\mathbb{H}} \nc{\bbI}{\mathbb{I}}
\nc{\bbJ}{\mathbb{J}} \nc{\bbK}{\mathbb{K}} \nc{\bbL}{\mathbb{L}}
\nc{\bbM}{\mathbb{M}} \nc{\bbN}{\mathbb{N}} \nc{\bbO}{\mathbb{O}}
\nc{\bbP}{\mathbb{P}} \nc{\bbQ}{\mathbb{Q}} \nc{\bbR}{\mathbb{R}}
\nc{\bbS}{\mathbb{S}} \nc{\bbT}{\mathbb{T}} \nc{\bbU}{\mathbb{U}}
\nc{\bbV}{\mathbb{V}} \nc{\bbW}{\mathbb{W}} \nc{\bbX}{\mathbb{X}}
\nc{\bbZ}{\mathbb{Z}}
\nc{\bA}{{\bf A}} \nc{\bB}{{\bf B}} \nc{\bC}{{\bf C}}
\nc{\bD}{{\bf D}} \nc{\bE}{{\bf E}} \nc{\bF}{{\bf F}}
\nc{\bG}{{\bf G}} \nc{\bH}{{\bf H}} \nc{\bI}{{\bf I}}
\nc{\bJ}{{\bf J}} \nc{\bK}{{\bf K}} \nc{\bL}{{\bf L}}
\nc{\bM}{{\bf M}} \nc{\bN}{{\bf N}} \nc{\bO}{{\bf O}}
\nc{\bP}{{\bf P}} \nc{\bQ}{{\bf Q}} \nc{\bR}{{\bf R}}
\nc{\bS}{{\bf S}} \nc{\bT}{{\bf T}} \nc{\bU}{{\bf U}}
\nc{\bV}{{\bf V}} \nc{\bW}{{\bf W}} \nc{\bX}{{\bf X}}
\nc{\bZ}{{\bf Z}} \nc{\bm}{{\bf m}} \nc{\bv}{{\bf v}}
\nc{\ba}{{\bf a}} \nc{\be}{{\bf e}} \nc{\bu}{{\bf u}}
\nc{\brr}{{\bf r}}
\nc{\cA}{{\cal A}} \nc{\cB}{{\cal B}} \nc{\cC}{{\cal C}}
\nc{\cD}{{\cal D}} \nc{\cE}{{\cal E}} \nc{\cF}{{\cal F}}
\nc{\cG}{{\cal G}} \nc{\cH}{{\cal H}} \nc{\cI}{{\cal I}}
\nc{\cJ}{{\cal J}} \nc{\cK}{{\cal K}} \nc{\cL}{{\cal L}}
\nc{\cM}{{\cal M}} \nc{\cN}{{\cal N}} \nc{\cO}{{\cal O}}
\nc{\cP}{{\cal P}} \nc{\cQ}{{\cal Q}} \nc{\cR}{{\cal R}}
\nc{\cS}{{\cal S}} \nc{\cT}{{\cal T}} \nc{\cU}{{\cal U}}
\nc{\cV}{{\cal V}} \nc{\cW}{{\cal W}} \nc{\cX}{{\cal X}}
\nc{\cZ}{{\cal Z}}
\nc{\hA}{{\hat{A}}} \nc{\hB}{{\hat{B}}} \nc{\hC}{{\hat{C}}}
\nc{\hD}{{\hat{D}}} \nc{\hE}{{\hat{E}}} \nc{\hF}{{\hat{F}}}
\nc{\hG}{{\hat{G}}} \nc{\hH}{{\hat{H}}} \nc{\hI}{{\hat{I}}}
\nc{\hJ}{{\hat{J}}} \nc{\hK}{{\hat{K}}} \nc{\hL}{{\hat{L}}}
\nc{\hM}{{\hat{M}}} \nc{\hN}{{\hat{N}}} \nc{\hO}{{\hat{O}}}
\nc{\hP}{{\hat{P}}} \nc{\hR}{{\hat{R}}} \nc{\hS}{{\hat{S}}}
\nc{\hT}{{\hat{T}}} \nc{\hU}{{\hat{U}}} \nc{\hV}{{\hat{V}}}
\nc{\hW}{{\hat{W}}} \nc{\hX}{{\hat{X}}} \nc{\hZ}{{\hat{Z}}}
\nc{\hn}{{\hat{n}}}
\def\ghz{\mathop{\rm GHZ}}
\def\wi{\widetilde}
\newcommand{\bra}[1]{\langle#1|}
\newcommand{\ket}[1]{|#1\rangle}
\newcommand{\ketbra}[2]{|#1\rangle\!\langle#2|}
\def\Dbar{\leavevmode\lower.6ex\hbox to 0pt
	{\hskip-.23ex\accent"16\hss}D}
\begin{document}
	
\title{Strong quantum nonlocality with entanglement}
	
%\date{\today}
	
%\pacs{03.65.Ud, 03.67.Mn}

\author{Fei Shi}
\email[]{shifei@mail.ustc.edu.cn}
\affiliation{School of Cyber Security,
		University of Science and Technology of China, Hefei, 230026, People's Republic of China.}
	
\author{Mengyao Hu}\email[]{mengyaohu@buaa.edu.cn }
\affiliation{School of Mathematical Sciences, Beihang University, Beijing 100191, China}
	
\author{Lin Chen}
\email[]{linchen@buaa.edu.cn}
\affiliation{School of Mathematical Sciences, Beihang University, Beijing 100191, China}
\affiliation{International Research Institute for Multidisciplinary Science, Beihang University, Beijing 100191, China}
	
\author{Xiande Zhang}
\email[]{drzhangx@ustc.edu.cn}
\affiliation{School of Mathematical Sciences,
		University of Science and Technology of China, Hefei, 230026, People's Republic of China}

\begin{abstract}
Strong quantum nonlocality was introduced recently as a stronger manifestation of nonlocality in multipartite systems through the notion of local irreducibility in all bipartitions. Known existence results for sets of strongly nonlocal orthogonal states are limited to product states. In this paper, based on the Rubik's cube, we give the first construction of such sets consisting of entangled states in  $d\otimes d\otimes d$ for all $d\geq 3$. Consequently, we answer an open problem given by  Halder \emph{et al.} [\href{https://journals.aps.org/prl/abstract/10.1103/PhysRevLett.122.040403}{Phys. Rev. Lett. \textbf{122}, 040403 (2019)}], that is, orthogonal entangled bases that are strongly nonlocal do exist. Furthermore, we propose two entanglement-assisted protocols for local discrimination of our results.  Each  protocol consumes less entanglement resource than the teleportation-based protocol averagely. Our results exhibit the phenomenon of strong quantum nonlocality with entanglement.

%Nonlocality plays an important role in quantum information theory. Based on the  Rubik's cube, we give the first construction of strongly nonlocal orthogonal entangled sets in  $ d\otimes d\otimes d$ for  $d\geq 3$, and  extend them  to strongly nonlocal orthogonal entangled bases. It solves an open question given by [S. Halder \emph{et al.}, \href{https://journals.aps.org/prl/abstract/10.1103/PhysRevLett.122.040403}{Phys. Rev. Lett. \textbf{122}, 040403 (2019)}]. Furthermore, we apply our results to propose two entanglement-assisted protocols for local discrimination. Each  protocol consumes less entanglement resource than teleportation-based protocol averagely. Our results exhibit the phenomenon of strong quantum nonlocality with entanglement.
\end{abstract}
	
\maketitle
	
%\tableofcontents
	
%\Large
	
\section{Introduction}\label{sec:int}
A set of orthogonal quantum states is locally indistinguishable, if it is not possible to optimally
distinguish the states by any sequence of local operations and classical communications (LOCC). It exhibits the phenomenon  of quantum nonlocality.  Local indistinguishability can be used for data
hiding \cite{terhal2001hiding,divincenzo2002quantum,eggeling2002hiding,Matthews2009Distinguishability} and quantum secret sharing \cite{Markham2008Graph}.  Any three Bell states cannot be locally distinguished \cite{ghosh2001distinguishability}. The phenomenon of more nonlocality with less entanglement was shown in Ref.~\cite{PhysRevLett.90.047902}.
Bennett \emph{et al.} first constructed a locally indistinguishable orthogonal product basis in bipartite Hilbert space $ 3\otimes 3$, which shows the phenomenon of quantum nonlocality without entanglement \cite{bennett1999quantum}. Later, locally indistinguishable orthogonal entangled sets  and orthogonal product sets are widely investigated \cite{1,2,3,4,5,6,7,8,9,10,11,12,13,14,15,16,17,18}.
	
Recently,  Halder \emph{et al.} proposed the concept of \emph{locally irreducible set} \cite{Halder2019Strong}.  It is a set of orthogonal quantum states that it is impossible to locally eliminate one or more
states from the set by orthogonality-preserving local measurements. Local irreducibility sufficiently ensures local indistinguishability,  while the converse is not true.  In $ 3\otimes 3\otimes 3$ and $ 4\otimes 4\otimes 4$, they constructed two orthogonal product bases  that are locally irreducible in all bipartitions.  It shows the phenomenon of strong quantum nonlocality without entanglement.  Ref.~\cite{yuan2020strong} constructed the strongly nonlocal orthogonal product sets (SNOPSs) of size $6(d^2-1)$ in $ d\otimes d\otimes d$ for $d\geq 3$, and a strongly nonlocal orthogonal product basis (SNOPB) in $ 3\otimes 3\otimes 3\otimes 3$. Ref.~\cite{PhysRevA.99.062108} generalized the definition of strong nonlocality based on the local irreducibility in some multipartitions,  and gave some examples in $ 3\otimes 3\otimes 3$ and   $ 3\otimes 3\otimes 3\otimes 3$. In spite of these constructions, the existence of orthogonal entangled sets that are locally irreducible in all bipartitions remains unknown. An open question has been proposed to find 
orthogonal entangled bases that are locally irreducible in all bipartitions \cite{Halder2019Strong}. Such bases are called strongly nonlocal orthogonal entangled bases (SNOEBs). We shall give a positive answer to this open question.

In this paper, we construct strongly nonlocal orthogonal entangled sets (SNOESs) and SNOEBs in $ d\otimes d\otimes d$ for $d\geq 3$, and provide two efficient entanglement-assisted discrimination protocols for an SNOES in $ 3\otimes 3\otimes 3$.  
First, by using Fig.~\ref{Figure:tilebell} and Fig.~\ref{Figure:tite333}, we construct an orthogonal entangled set of size $24$ and an orthogonal entangled basis in $ 3\otimes 3\otimes 3$, and we prove these two sets are both strongly nonlocal by using  Fig.~\ref{Figure:tite39} in Lemma~\ref{lem:strong}.  Then, we show an SNOES of size $54$ and an SNOEB in  $ 4\otimes 4\otimes 4$ in Lemma~\ref{lem:strong444}. Next, by using Lemma~\ref{lem:strong}, Lemma~\ref{lem:strong444} and Fig.~\ref{Figure:555666}, we show  an SNOES of size $d^3-d$ and an SNOEB in $ d\otimes d\otimes d$ when $d\geq 3$ is odd,  and an SNOES of size $d^3-d-6$ and an SNOEB in $ d\otimes d\otimes d$ when $d\geq 3$ is even in Theorem~\ref{thm:generalddd}.   
Finally, we give two entanglement-assisted discrimination protocols for the SNOES in $ 3\otimes 3\otimes 3$ in Proposition~\ref{pro:distinguish39} and Proposition~\ref{pro:distinguish333}. Each  protocol consumes less entanglement resource than the teleportation-based protocol averagely.

Entanglement-assisted discrimination also attracts more and more attention \cite{ghosh2001distinguishability,cohen2008understanding,bandyopadhyay2016entanglement,zhang2016entanglement,gungor2016entanglement,zhang2018local,Sumit2019Genuinely,zhang2020locally,Shi2020Unextendible}. By using sufficient entanglement, a set of orthogonal states  can be always distinguished through the teleportation-based protocol \cite{Bennett1993Teleporting}.  Since entanglement is a costly resource, the discrimination with less entanglement is desirable. It is known that unextendible product bases (UPBs) can not be locally distinguished \cite{de2004distinguishability}.  A two-qutrit UPB of size five can be locally distinguished with a two-qubit maximally entangled state \cite{cohen2008understanding}.  Since a strongly nonlocal orthogonal set cannot be locally distinguished in every bipartition, a perfect local discrimination of this set would require a resource state that must be entangled in all bipartitions.  Ref.~\cite{Sumit2019Genuinely} gave different entanglement-assisted  discrimination protocols for some SNOPBs, and each protocol consumes less entanglement resource than the teleportation-based protocol averagely.  By comparing our entanglement-assisted discrimination protocols for the SNOES with those for the SNOPB in $3\otimes 3\otimes 3$, we show that the  entanglement can increase the difficulty to locally distinguish orthogonal states.
	
The rest of this paper is organized as follows. In Sec.~\ref{sec:prelimi},  we introduce the preliminary knowledge used in this paper. In Sec.~\ref{sec:strong333}, we give an elegant construction of SNOESs and SNOEBs in $ d\otimes d\otimes d$ for $d\geq 3$ by using a $d\times d\times d$ Rubik's cube. In Sec.~\ref{sec:discri-entang}, we investigate the entanglement-assisted discrimination protocols for the SNOES in $ 3\otimes 3\otimes 3$. Finally, we conclude in Sec.~\ref{sec:con}. 
	
\section{Preliminary}\label{sec:prelimi}
	
Throughout this paper, we do not normalize states
and operators for simplicity, and we consider only pure states and POVM measurements. A set of orthogonal states is \emph{locally indistinguishable}, if it is not possible to distinguish the states by any sequence of
local operations and classical communications (LOCC). A measurement performed to distinguish a set of mutually orthogonal  states is called an \emph{orthogonality-preserving measurement} if after the measurement the states
remain mutually orthogonal. Further, a measurement is nontrivial if not all the POVM elements are proportional to the identity operator. Otherwise, the measurement is trivial.
		
Consider an $n$-partite quantum system  with Hilbert space $ {d_1}\otimes {d_2}\otimes\cdots\otimes  {d_n}$.  A set of orthogonal quantum states is called a  \emph{locally irreducible set} if it is not possible to eliminate one or more states from the set by
orthogonality-preserving local measurements \cite{Halder2019Strong}. The idea is to check whether an orthogonality-preserving POVM on any of the subsystems is trivial or not. If it is trivial for all subsystems, then the set of states is locally irreducible.
	
Obviously, local irreducibility sufficiently ensures local indistinguishability. However, the converse is not true. For example, consider the following set in $ 2\otimes 3$,
\begin{equation}\label{eq:fivebell}
\begin{split}
&\ket{\psi_{1,2}}=\ket{0,0}\pm\ket{1,1}, \quad \ket{\psi_{3,4}}=\ket{0,1}\pm\ket{1,0},\\
&\ket{\psi_5}=\ket{0,2},
\end{split}
\end{equation}
where $\ket{\psi_{j_1,j_2}}=\ket{k_1,k_2}\pm\ket{k_3,k_4}$ means $\ket{\psi_{j_1}}=\ket{k_1,k_2}+\ket{k_3,k_4}$ and $\ket{\psi_{j_2}}=\ket{k_1,k_2}-\ket{k_3,k_4}$. Since the Bell basis can not be locally distinguished \cite{ghosh2001distinguishability}, $\{\ket{\psi_i}\}_{i=1}^5$ given by Eq.~(\ref{eq:fivebell}) is locally indistinguishable. However, $\{\ket{\psi_i}\}_{i=1}^5$ is locally reducible, since Bob can use the measurement $\{\ketbra{2}{2}, I-\ketbra{2}{2}\}$ to eliminate $\{\ket{\psi_i}\}_{i=1}^4$ and $\ket{\psi_5}$, respectively. If we only consider the Bell basis in $ 2\otimes 2$,
\begin{equation}\label{eq:bell}
\ket{\psi_{1,2}}=\ket{0,0}\pm\ket{1,1}, \quad \ket{\psi_{3,4}}=\ket{0,1}\pm\ket{1,0},
\end{equation}
it is locally irreducible \cite{Halder2019Strong}. Let Alice go first and start the orthogonality-preserving POVM $E_m=M^\dagger M$. Each POVM element can be written as a $2\times 2$ matrix in the basis $\{\ket{0},\ket{1}\}$: $E_m=\begin{pmatrix}a_{0,0} &a_{0,1}\\ a_{1,0} &a_{1,1}\end{pmatrix}.$
Then the postmeasurement states of $\{M\otimes I\ket{\psi_k}\}_{k=1}^4$ should be mutually orthogonal. Since $\bra{\psi_1}M^\dagger M\otimes I\ket{\psi_2}=0$, it implies  $a_{0,0}=a_{1,1}$. Moreover, since $\bra{\psi_1}M^\dagger M\otimes I\ket{\psi_3}=\bra{\psi_1}M^\dagger M\otimes I\ket{\psi_4}=0$, we obtain $a_{0,1}\pm a_{1,0}=0$. It implies $a_{0,1}= a_{1,0}=0$. Then $E_{m}$ is trivial. It means that Alice cannot go first.  Bob also cannot go first from the symmetry of the Bell basis. Thus, the Bell basis is locally irreducible.
	
In Ref.~\cite{Halder2019Strong}, the authors considered strong quantum nonlocality without entanglement. Although they only defined for product states, it is natural to extend it for general orthogonal states. In $ {d_1}\otimes {d_2}\otimes\cdots\otimes  {d_n}$, $n\geq 3$, a set of orthogonal states is \emph{strongly nonlocal} if it is locally irreducible in every bipartition.

 There exists a locally irreducible set that is not strongly nonlocal \cite{Halder2019Strong}. For example, three-qubit $\ghz$ basis, $\ket{\varphi_{1,2}}=\ket{0,0,0}\pm\ket{1,1,1}$, $\ket{\varphi_{3,4}}=\ket{0,1,1}\pm\ket{1,0,0}$, $\ket{\varphi_{5,6}}=\ket{0,0,1}\pm\ket{1,1,0}$, $\ket{\varphi_{7,8}}=\ket{0,1,0}\pm\ket{1,0,1}$ is locally irreducible. If we consider $A|BC$ bipartition, this basis is  locally reducible. Since Bob and Charlie can use the the measurement $\{\ketbra{0,0}{0,0}+\ketbra{1,1}{1,1}, I-(\ketbra{0,0}{0,0}+\ketbra{1,1}{1,1})\}$ to eliminate $\{\ket{\varphi_i}\}_{i=5}^8$ and $\{\ket{\varphi_i}\}_{i=1}^4$, respectively.  The authors in Ref.~\cite{Halder2019Strong}  proposed an open question, whether one can find orthogonal entangled  bases  that are locally irreducible in all bipartitions. That is to find strongly nonlocal orthogonal entangled bases (SNOEBs). Their intuition is that a genuinely entangled orthogonal basis
(the basis vectors are entangled in every bipartition) might be a promising candidate. However,  they showed that the $N$-qubit $\ghz$ basis is locally reducible in all bipartitions. We have also tried some genuinely entangled orthogonal bases in Refs.~ \cite{li2019k,shishenchenzhang,raissi2018optimal,raissi2019constructing}, but they are not strongly nonlocal. Thus, we begin to consider an orthogonal entangled basis which contains entangled states that are not genuinely entangled.  More generally, we will consider  strongly nonlocal orthogonal entangled sets (SNOESs) which do not form a complete basis. 
		
\section{SNOESs and SNOEBs in $ d\otimes d\otimes d$ for $d\geq 3$ }\label{sec:strong333}
	
In this section, we give an elegant construction of an SNOES and an SNOEB in $ 3\otimes 3\otimes 3$ in Lemma~\ref{lem:strong}. Similarly, we show an SNOES and an SNOEB in $ 4\otimes  4\otimes  4$ in Lemma~\ref{lem:strong444}. Further, we generalize these two constructions to $ d\otimes d\otimes d$ for any $d\geq 3$ in Theorem~\ref{thm:generalddd}.
	
\begin{figure}[b]
	\centering
	\includegraphics[scale=0.45]{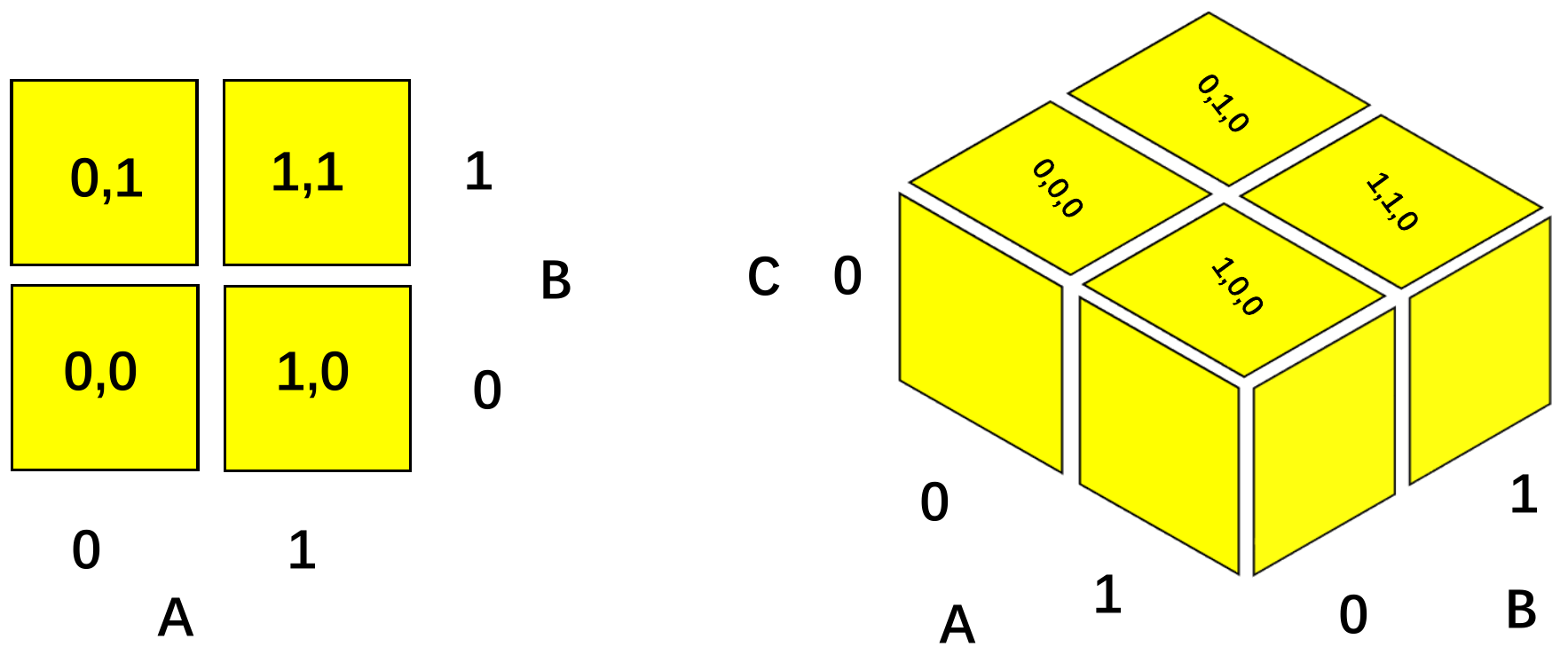}
	\caption{Bell states in the  $2\times 2$ grid.}  \label{Figure:tilebell}
\end{figure}

\subsection{An SNOES and an SNOEB in $ 3\otimes 3\otimes 3$}		
The left figure of Fig.~\ref{Figure:tilebell} is a $2\times 2$ grid. If we choose the diagonal grid cells, we can construct four states, $\ket{0,0}\pm\ket{1,1}$ and $\ket{0,1}\pm\ket{1,0}$. Obviously, these four states form the Bell basis in $ 2\otimes  2$. See also  Eq.~(\ref{eq:bell}). If we add an ancillary system $C$ (see the right figure of Fig.~\ref{Figure:tilebell}), these four states are transformed into $\ket{0,0,0}\pm\ket{1,1,0}$ and $\ket{0,1,0}\pm\ket{1,0,0}$. Each state is an entangled state across bipartitions $A|BC$ and $B|AC$, and it is a product state across $C|AB$ bipartition. Thus, it is an entangled state, but it is not a genuinely entangled state.

\begin{figure}[t]
	\centering
	\includegraphics[scale=0.5]{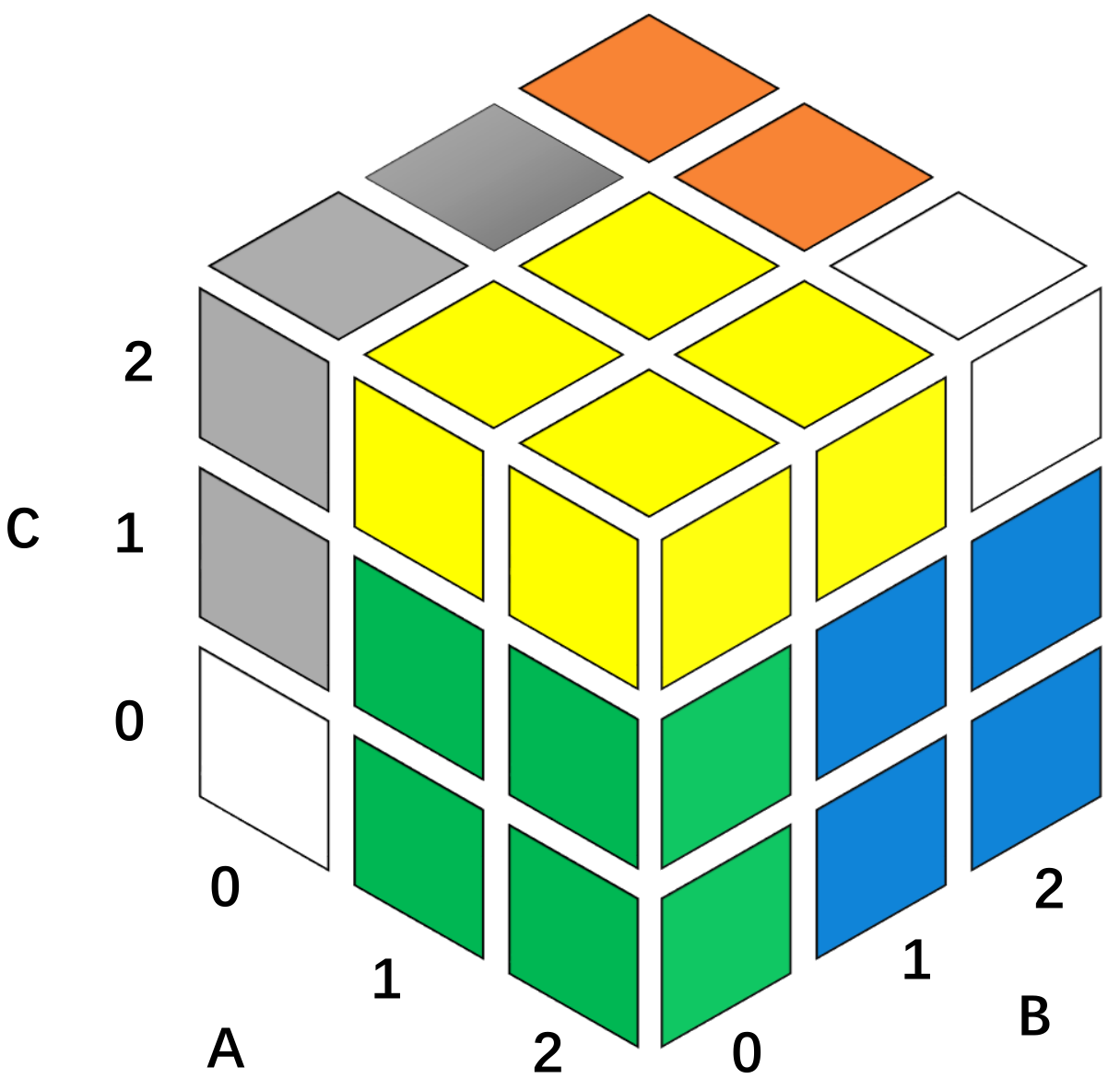}
	\caption{$3\times 3\times 3$
		Rubik's cube with six $2\times 2\times 1$ subcubes, $\{1,2\}\times \{0\}\times\{0,1\}$,  $\{1,2\}\times \{0,1\}\times\{2\}$,  $\{2\}\times \{1, 2\}\times\{0,1\}$, $\{0,1\}\times \{1,2\}\times\{0\}$,   $\{0,1\}\times \{2\}\times\{1,2\}$,  $\{0\}\times \{0,1\}\times\{1,2\}$, and three  $1\times 1\times 1$ subcubes，  $\{0\}\times\{0\}\times\{0\}$, $\{1\}\times\{1\}\times\{1\}$,  $\{2\}\times\{2\}\times\{2\}$. Note that $J_1 \times J_2\times J_3$ means that the index set of $A$ part is $J_1$, $B$ part is $J_2$, and $C$ part is $J_3$.  }  \label{Figure:tite333}
\end{figure}
In Fig.~\ref{Figure:tite333}, there are six $2\times 2\times 1$ subcubes, $\{1,2\}\times \{0\}\times\{0,1\}$,  $\{1,2\}\times \{0,1\}\times\{2\}$,  $\{2\}\times \{1, 2\}\times\{0,1\}$, $\{0,1\}\times \{1,2\}\times\{0\}$,   $\{0,1\}\times \{2\}\times\{1,2\}$,  $\{0\}\times \{0,1\}\times\{1,2\}$, where $J_1 \times J_2\times J_3$ means that the index set of $A$ part is $J_1$, $B$ part is $J_2$, and $C$ part is $J_3$. From the point of Fig.~\ref{Figure:tilebell}, we can construct an entangled set in $ 3\otimes 3\otimes 3$ by the six $2\times 2\times 1$ subcubes,
\begin{equation}\label{eq:tile333}
\begin{aligned}
\ket{\psi_{1,2}}&=\ket{1,0,0}\pm\ket{2,0,1}, \;  \, \     \ket{\psi_{3,4}}=\ket{1,0,1}\pm\ket{2,0,0},    \\
\ket{\psi_{5,6}}&=\ket{1,0,2}\pm\ket{2,1,2}, \;  \, \ \ket{\psi_{7,8}}=\ket{1,1,2}\pm\ket{2,0,2},  \\	\ket{\psi_{9,10}}&=\ket{2,1,0}\pm\ket{2,2,1},    \ket{\psi_{11,12}}=\ket{2,1,1}\pm\ket{2,2,0},    \\ \ket{\psi_{13,14}}&=\ket{0,1,0}\pm\ket{1,2,0},  \ket{\psi_{15,16}}=\ket{0,2,0}\pm\ket{1,1,0}, \\
\ket{\psi_{17,18}}&=\ket{0,2,1}\pm\ket{1,2,2},  \ket{\psi_{19,20}}=\ket{0,2,2}\pm\ket{1,2,1},\\
\ket{\psi_{21,22}}&=\ket{0,0,1}\pm\ket{0,1,2},  \ket{\psi_{23,24}}=\ket{0,0,2}\pm\ket{0,1,1}.  
\end{aligned}
\end{equation}	
 The states of $\{\ket{\psi_k}\}_{k=1}^{24}$  must be mutually orthogonal due to the disjointness of subcubes in Fig.~\ref{Figure:tite333}.  For each state $\ket{\psi_k}$, $1\leq k\leq 24$, there must exist two bipartitions of $\{A|BC, B|AC, C|AB\}$ such that $\ket{\psi_k}$ is an entangled state, while it is a product state across the remaining bipartition.
Next, we  extend this orthogonal entangled set $\{\ket{\psi_k}\}_{k=1}^{24}$ to an orthogonal entangled basis. Since there are three  $1\times 1\times 1$ subcubes,  $\{0\}\times\{0\}\times\{0\}$, $\{1\}\times\{1\}\times\{1\}$,  $\{2\}\times\{2\}\times\{2\}$ left in Fig.~\ref{Figure:tite333}, we can choose three $\ghz$ states:
\begin{equation}\label{eq:threeghz}
\begin{split}
&\ket{\psi_{25}}=\ket{0,0,0}+\ket{1,1,1}+\ket{2,2,2}, \\
&\ket{\psi_{26}}=\ket{0,0,0}+w_3\ket{1,1,1}+w_3^2\ket{2,2,2}, \\ 
&\ket{\psi_{27}}=\ket{0,0,0}+w_3^2\ket{1,1,1}+w_3\ket{2,2,2}.
\end{split}
\end{equation}
Obviously, these three $\ghz$ states are mutually orthogonal, and they are all  genuinely entangled states.
Then $\{\ket{\psi_k}\}_{k=1}^{27}$ given by Eqs.~(\ref{eq:tile333}) and (\ref{eq:threeghz}) forms an orthogonal entangled basis in $ 3\otimes 3\otimes 3$  by the structure of Fig.~\ref{Figure:tite333}. In the following, we show that the orthogonal entangled set $\{\ket{\psi_k}\}_{k=1}^{24}$ and the corresponding basis $\{\ket{\psi_k}\}_{k=1}^{27}$  are both strongly nonlocal.
	
\begin{lemma}\label{lem:strong}
In $ 3\otimes 3\otimes 3$, the orthogonal entangled set $\{\ket{\psi_k}\}_{k=1}^{24}$ given by Eq.~(\ref{eq:tile333}) is strongly nonlocal. The orthogonal entangled basis $\{\ket{\psi_k}\}_{k=1}^{27}$ given by Eqs.~(\ref{eq:tile333}) and (\ref{eq:threeghz}) is also strongly nonlocal.
\end{lemma}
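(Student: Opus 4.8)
The plan is to prove the set $\{\ket{\psi_k}\}_{k=1}^{24}$ is strongly nonlocal, and then obtain the basis for free. By definition I must establish local irreducibility in each of the three bipartitions $A|BC$, $B|AC$, $C|AB$, i.e.\ that every orthogonality-preserving POVM on each of the six sides is trivial. I would first cut this down with two observations. The construction has a cyclic $\mathbb{Z}_3$ symmetry: the coordinate permutation $(J_1,J_2,J_3)\mapsto(J_3,J_1,J_2)$ permutes the six $2\times2\times1$ subcubes of Fig.~\ref{Figure:tite333} among themselves and fixes the three diagonal cells, so it maps the set (and the basis) to itself up to relabeling, sending the $A|BC$ analysis to the $C|AB$ and then the $B|AC$ one. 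Moreover, a single-party element such as $E_A\otimes I_B$ is a special case of a POVM element on the combined system $AB$, so triviality of every orthogonality-preserving element on the $9$-dimensional combined parties $BC$, $CA$, $AB$ already forces triviality of the single-party elements. Hence it suffices to prove one statement: any orthogonality-preserving POVM element $E$ on the combined party $BC$ in the cut $A|BC$ satisfies $E\propto I_9$.

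For this I would lay out the $A|BC$ cut as the $3\times 9$ grid of Fig.~\ref{Figure:tite39}, with rows labeled by $A\in\{0,1,2\}$ and columns by the nine values of $BC=(j,k)$. Because the subcubes are disjoint, the $24$ states appear as cell-disjoint dominoes, each arising as a $\pm$ pair over the same two cells; they occupy all cells except the three diagonal ones $(0,0,0),(1,1,1),(2,2,2)$, and in particular every one of the nine columns contains an occupied cell. Writing $E=(E_{c,c'})$ as a $9\times 9$ Hermitian matrix in the column basis, orthogonality preservation reads $\bra{\psi_s}(I_A\otimes E)\ket{\psi_t}=0$ for all $s\neq t$.

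The diagonal is the easy part. For a domino with cells $(a_1,c)$ and $(a_2,c')$, evaluating the relation on its own $\pm$ pair gives $E_{c,c}=E_{c',c'}$ (exactly as for the Bell basis around Eq.~(\ref{eq:bell})), and when $a_1=a_2$ it also gives $\mathrm{Im}\,E_{c,c'}=0$. Building the graph on the nine columns with one edge per domino, I would check it is connected — tracing the twelve edges shows every column is reachable from the column $(0,0)$ — so all diagonal entries equal a common value $\lambda$.

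The main work, and the main obstacle, is to show every off-diagonal entry vanishes. Here I would proceed in two layers. Within each $2\times2\times1$ subcube the four Bell-type states reproduce, on the two (or four) columns it occupies, exactly the Bell-basis computation of Sec.~\ref{sec:prelimi}: evaluating the relation on suitable pairs from the block yields both $2\,\mathrm{Re}\,E_{c,c'}=0$ and $2i\,\mathrm{Im}\,E_{c,c'}=0$, killing the intra-block off-diagonals. The remaining off-diagonal entries, between columns lying in different subcubes, are not controlled by any single block and must be eliminated by cross relations $\bra{\psi_s}(I_A\otimes E)\ket{\psi_t}=0$ between dominoes of different subcubes that share a common row, propagated through the grid. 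Organizing these relations so that all $36$ off-diagonal pairs are pinned to zero — which is precisely what the interlocking arrangement of subcubes in Fig.~\ref{Figure:tite39} is designed to guarantee — is the delicate, bookkeeping-heavy step. Once $E=\lambda I_9$, the $BC$ POVM is trivial; by the $\mathbb{Z}_3$ symmetry the same holds for $CA$ and $AB$, and the single-party elements follow as special cases, so the set is locally irreducible in all three bipartitions, i.e.\ strongly nonlocal. Finally, since $\{\ket{\psi_k}\}_{k=1}^{27}$ is a superset of the set, any orthogonality-preserving POVM for the basis is in particular one for the set, hence trivial; thus the basis is strongly nonlocal as well.
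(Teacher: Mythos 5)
Your overall architecture is sound, and in one place cleaner than the paper's: the reduction from the basis to the $24$-state set, the $\mathbb{Z}_3$ cyclic-symmetry reduction to the single cut $A|BC$, and the observation that triviality of orthogonality-preserving POVMs on the three combined parties already forces triviality of the single-party POVMs (since $E_A\otimes I_B$ is itself an orthogonality-preserving element on the combined party $AB$) are all correct; the paper instead runs a separate explicit computation for Alice going first, so that last observation is a genuine small economy. Your diagonal argument is also complete: each domino's own $\pm$ pair gives $E_{c,c}=E_{c',c'}$, the flat dominoes additionally give $\mathrm{Im}\,E_{c,c'}=0$, and the twelve edges do connect all nine columns.

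The gap is the off-diagonal part, which is the actual substance of the lemma; your treatment of it is partly wrong and mostly missing. First, the claim that the Bell-basis computation kills all intra-block off-diagonals fails for the two subcubes whose $A$-index is a singleton, $\{2\}\times\{1,2\}\times\{0,1\}$ and $\{0\}\times\{0,1\}\times\{1,2\}$: there all four cells lie in a single row of Fig.~\ref{Figure:tite39}, and the block-internal relations yield only $E_{5,4}=E_{5,6}=E_{7,4}=E_{7,6}=0$ together with $\mathrm{Im}\,E_{5,7}=\mathrm{Im}\,E_{4,6}=0$; the real parts of $E_{5,7}$ and $E_{4,6}$ (and likewise of $E_{1,3}$, $E_{2,4}$ in the other flat block) are completely unconstrained inside the block and can only be killed by cross-block relations. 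Second, and more seriously, the cross-block off-diagonals --- the large majority of the $36$ pairs --- are exactly where the lemma lives, and you explicitly defer them as ``delicate, bookkeeping-heavy,'' appealing to what the interlocking arrangement ``is designed to guarantee.'' That is an assertion of the conclusion, not a proof. The paper's proof consists precisely of this bookkeeping: for each pair of columns $(\ell,j)$ with $0\le\ell\le 3$ and $\ell+1\le j\le 8$ it exhibits two cells in the same row lying in \emph{different} dominoes (e.g.\ $(2,0)$ and $(2,1)$, from $\ket{\varphi_{3,4}}$ and $\ket{\varphi_{1,2}}$, give $b_{0,1}=0$; $(1,0)$ and $(1,8)$ give $b_{0,8}=0$), and then Hermiticity plus centrosymmetry of the $3\times 9$ grid ($b_{8-\ell,8-j}=b_{\ell,j}$) covers all remaining pairs, including the four real parts left over from the flat blocks. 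Until you exhibit such a same-row, different-domino witness (or an equivalent closed argument) for every column pair, the key claim $E\propto I_9$ is unproven, and with it the lemma.
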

\begin{proof}
Since the strong nonlocality of $\{\ket{\psi_k}\}_{k=1}^{24}$ can imply the strong nonlocality of $\{\ket{\psi_k}\}_{k=1}^{27}$, we only need to show that $\{\ket{\psi_k}\}_{k=1}^{24}$ is strongly nonlocal.
		
First, we consider $A|BC$ bipartition. Define a bijection from the basis  $\{\ket{p,q}\}_{p,q=0}^{2}$ in $ 3\otimes 3$ to the  basis in $\bbC^9$ as follows: $\ket{0,0}\rightarrow\ket{0}$, $\ket{0,1}\rightarrow\ket{1}$, $\ket{0,2}\rightarrow\ket{2}$, $\ket{1,0}\rightarrow\ket{5}$, $\ket{1,1}\rightarrow\ket{4}$,  $\ket{1,2}\rightarrow\ket{3}$, $\ket{2,0}\rightarrow\ket{6}$, $\ket{2,1}\rightarrow\ket{7}$, $\ket{2,2}\rightarrow\ket{8}$. Then we rewrite the set of states $\{\ket{\psi_k}\}_{k=1}^{24}$ in $ 3\otimes 3\otimes 3$  as $\{\ket{\varphi_k}\}_{k=1}^{24}$ in $ 3\otimes 9$,
\begin{equation}\label{eq:tile39}
\begin{aligned}
\ket{\varphi_{1,2}}&=\ket{1,0}\pm\ket{2,1},\ &\ket{\varphi_{3,4}}&=\ket{1,1}\pm\ket{2,0},\\
\ket{\varphi_{5,6}}&=\ket{1,2}\pm\ket{2,3},\ &\ket{\varphi_{7,8}}&=\ket{1,3}\pm\ket{2,2},\\
\ket{\varphi_{9,10}}&=\ket{2,5}\pm\ket{2,7},\ &\ket{\varphi_{11,12}}&=\ket{2,4}\pm\ket{2,6},\\
\ket{\varphi_{13,14}}&=\ket{0,5}\pm\ket{1,6},\ &\ket{\varphi_{15,16}}&=\ket{0,6}\pm\ket{1,5},\\
\ket{\varphi_{17,18}}&=\ket{0,7}\pm\ket{1,8},\ &\ket{\varphi_{19,20}}&=\ket{0,8}\pm\ket{1,7},\\
\ket{\varphi_{21,22}}&=\ket{0,1}\pm\ket{0,3},\ &\ket{\varphi_{23,24}}&=\ket{0,2}\pm\ket{0,4}.\\
\end{aligned}
\end{equation}
Eq.~(\ref{eq:tile39}) corresponds to the $3\times 9$ grid in Fig.~\ref{Figure:tite39}. Every gird has an index $(i,j)$, where $i$ is the row index of $A$ part, and $j$ is the column index of $BC$ part. For example, $\ket{\varphi_{1,2}}$ corresponds to the cell set $\{(1,0),(2,1)\}$. We need to show  that $\{\ket{\varphi_k}\}_{k=1}^{24}$ given by Eq.~(\ref{eq:tile39}) is locally irreducible.

\begin{figure}[t]
\centering
\includegraphics[scale=0.73]{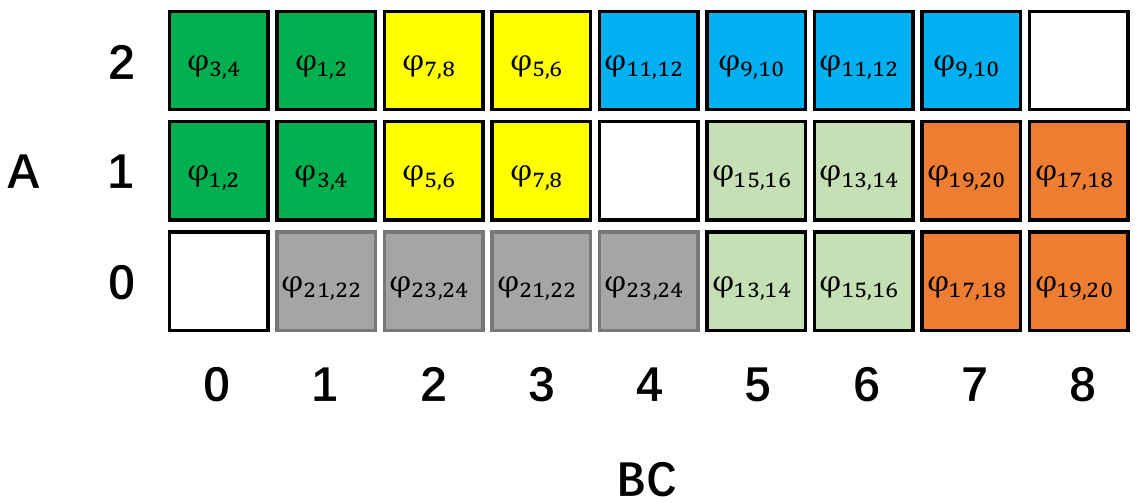}
\caption{The corresponding $3\times 9$ grid of Eq.~(\ref{eq:tile39}).   Every gird has an index $(i,j)$, where $i$ is the row index in $A$ part, and $j$ is the column index in $BC$ part. For example, $\ket{\varphi_{1,2}}$ corresponds to the cell set $\{(1,0),(2,1)\}$. } \label{Figure:tite39}
\end{figure}
		
Let Alice go first and start the orthogonality-preserving POVM,
\begin{equation}
E_m=M_1^\dagger M_1=
\begin{pmatrix}
a_{0,0} & a_{0,1}  &a_{0,2}\\
a_{1,0} & a_{1,1}  &a_{1,2}\\
a_{2,0} & a_{2,1}  &a_{2,2}\\	
\end{pmatrix}.
\end{equation}
Then the states of $\{M_1\otimes I\ket{\varphi_{k}}\}_{k=1}^{24}$ are mutually orthogonal.  In order to show that the off-diagonal elements of $E_m$ are all zero, we need to choose the cells with same column index in Fig.~\ref{Figure:tite39}. For example, if we choose the cell sets $\{(0,1),(0,3)\}$ and $\{(1,1),(2,0)\}$, then it means that we choose the states $\ket{\varphi_{21,22}}$ and $\ket{\varphi_{3,4}}$. Then $\bra{\varphi_{21}} E_m\otimes I\ket{\varphi_{3}}=(\bra{0,1}+\bra{0,3})E_m\otimes I(\ket{1,1}+\ket{2,0}=\bra{0,1}E_m\otimes I_9\ket{1,1}=\bra{0}E_m\ket{1}=a_{0,1}=0$, where $0$ and $1$ are the row indices of cells $(0,1)$ and $(1,1)$  respectively.  In this way, we obtain $a_{0,2}=0$ by $\{(0,1),(0,3)\}$ and $\{(2,1),(1,0)\}$. If we choose $\{(1,0),(2,1)\}$ and $\{(2,0),(1,1)\}$, then $\bra{\varphi_{1}} E_m\otimes I\ket{\varphi_{3}}=\bra{\varphi_{1}} E_m\otimes I\ket{\varphi_{4}}=0$. We obtain $a_{1,2}=a_{2,1}=0$. Since $E_m^{\dagger}=E_m$, the off-diagonal elements of $E_m$ are all zero. For diagonal elements of $E_m$, we choose  $\{(1,0),(2,1)\}$. Since $\bra{\varphi_{1}} E_m\otimes I\ket{\varphi_{2}}=0$, it implies $a_{1,1}=a_{2,2}$.  We can also obtain $a_{0,0}=a_{1,1}$ by  $\{(0,5),(1,6)\}$. Thus the diagonal elements of $E_m$ are all equal. It means that $E_m$ is proportional to the identity matrix, and hence Alice cannot go first.

Let Bob and Charlie go first and start the orthogonality-preserving POVM,
\begin{equation}
E_m'=
\begin{pmatrix}
b_{0,0} & b_{0,1}  &b_{0,2} &b_{0,3} &b_{0,4} &b_{0,5} &b_{0,6} &b_{0,7} &b_{0,8}\\
b_{1,0} & b_{1,1}  &b_{1,2} &b_{1,3} &b_{1,4} &b_{1,5} &b_{1,6} &b_{1,7} &b_{1,8}\\
b_{2,0} & b_{2,1}  &b_{2,2} &b_{2,3} &b_{2,4} &b_{2,5} &b_{2,6} &b_{2,7} &b_{2,8}\\
b_{3,0} & b_{3,1}  &b_{3,2} &b_{3,3} &b_{3,4} &b_{3,5} &b_{3,6} &b_{3,7} &b_{3,8}\\
b_{4,0} & b_{4,1}  &b_{4,2} &b_{4,3} &b_{4,4} &b_{4,5} &b_{4,6} &b_{4,7} &b_{4,8}\\
b_{5,0} & b_{5,1}  &b_{5,2} &b_{5,3} &b_{5,4} &b_{5,5} &b_{5,6} &b_{5,7} &b_{5,8}\\
b_{6,0} & b_{6,1}  &b_{6,2} &b_{6,3} &b_{6,4} &b_{6,5} &b_{6,6} &b_{6,7} &b_{6,8}\\
b_{7,0} & b_{7,1}  &b_{7,2} &b_{7,3} &b_{7,4} &b_{7,5} &b_{7,6} &b_{7,7} &b_{7,8}\\
b_{8,0} & b_{8,1}  &b_{8,2} &b_{8,3} &b_{8,4} &b_{8,5} &b_{8,6} &b_{8,7} &b_{8,8}\\
\end{pmatrix},
\end{equation}
where $E_m'=M_2^\dagger M_2$. Then the states of $\{I\otimes M_2\ket{\varphi_{k}}\}_{k=1}^{24}$ are mutually orthogonal. In order to show that the off-diagonal elements of $E_m'$ are all zero, we need to choose the cells with same row index in Fig.~\ref{Figure:tite39}. If we choose the cell sets $\{(2,0),(1,1)\}$ and $\{(2,1),(1,0)\}$, then $\bra{\varphi_{3}} I\otimes E_m'\ket{\varphi_{1}}=\bra{\varphi_{3}} I\otimes E_m'\ket{\varphi_{2}}=0$. It implies $b_{0,1}=0$,  where $0$ and $1$ are the column indices of cells $(2,0)$ and $(2,1)$  respectively. In the same way, we obtain $b_{0,j}=0$ by cells $(2,0)$ and $(2,j)$ for $1\leq j\leq 7$. We also obtain $b_{0,8}=0$ by cells $(1,0)$ and $(1,8)$. In the same way, we obtain $b_{\ell,j}=0$  for $0\leq \ell\leq 3$ and $\ell+1\leq j\leq 8$. Since Fig.~\ref{Figure:tite39} is centrosymmetric,  we have $b_{8-\ell,8-j}=b_{\ell,j}=0$  for $0\leq \ell\leq 3$ and $\ell+1\leq j\leq 8$. For example, $b_{5,4}=b_{3,4}=0$, $b_{6,4}=b_{2,4}=0$, $b_{7,4}=b_{1,4}=0$, $b_{8,4}=b_{0,4}=0$. Thus the  off-diagonal elements of $E_m'$ are all zero.  For diagonal elements of $E_m'$, if we choose  $\{(1,0),(2,1)\}$, then $\bra{\varphi_{1}} I\otimes E_m'\ket{\varphi_{2}}=0$. It implies $b_{0,0}=b_{1,1}$. In the same way, we obtain $b_{2,2}=b_{3,3}$ by $\{(1,2),(2,3)\}$, $b_{1,1}=b_{3,3}$ by $\{(0,1),(0,3)\}$, and $b_{2,2}=b_{4,4}$ by $\{(0,2),(0,4)\}$. It implies $b_{0,0}=b_{1,1}=b_{2,2}=b_{3,3}=b_{4,4}$.  Since Fig.~\ref{Figure:tite39} is centrosymmetric, we can also obtain $b_{4,4}=b_{5,5}=b_{6,6}=b_{7,7}=b_{8,8}$. Thus the diagonal elements of $E_m'$ are all equal. It means that $E_m'$ is proportional to the identity matrix, and hence Bob and Charlie cannot go first.
		
We obtain that $\{\ket{\varphi_k}\}_{k=1}^{24}$ given by Eq.~(\ref{eq:tile39}) is locally irreducible. It means that $\{\ket{\psi_k}\}_{k=1}^{24}$ given by Eq.~(\ref{eq:tile333})  is locally irreducible across $A|BC$ bipartition. Further, $\{\ket{\psi_k}\}_{k=1}^{24}$ is also locally irreducible across the bipartitions $B|AC$ and $C|AB$, due to the  symmetry of  Fig.~\ref{Figure:tite333}. Thus, the orthogonal entangled set $\{\ket{\psi_k}\}_{k=1}^{24}$ given by Eq.~(\ref{eq:tile333}) is strongly nonlocal.
\end{proof}
\vspace{0.4cm}
	
Next, we give the construction of an SNOES and an SNOEB in $ 4\otimes 4\otimes 4$.

 \begin{figure}[b]
 	\centering
 	\includegraphics[scale=0.32]{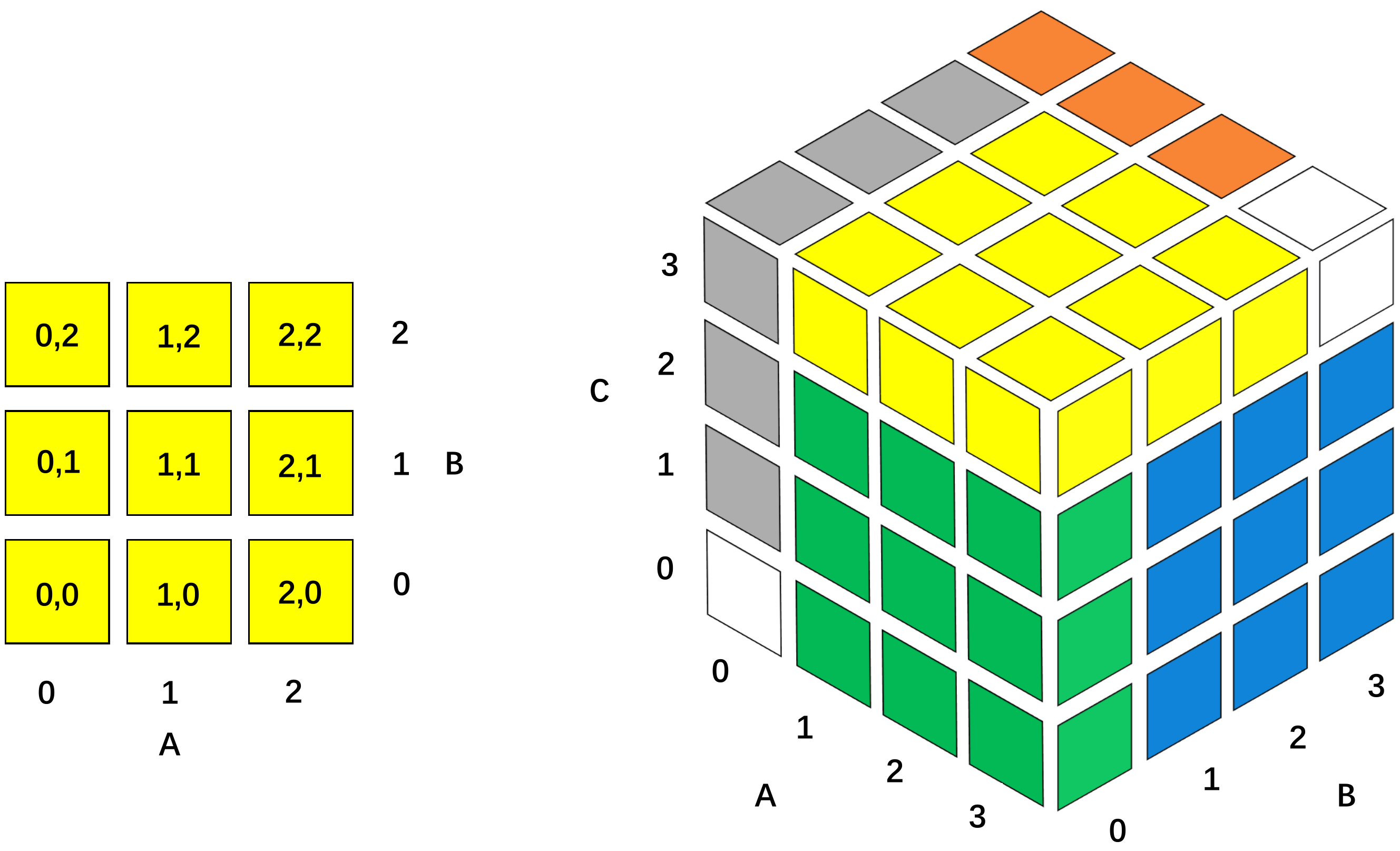}
 	\caption{A $3\times 3$ grid, and a $4\times 4\times 4$
 		Rubik's cube with six $3\times 3\times 1$ subcubes, $\{1,2,3\}\times \{0\}\times\{0,1,2\}$, $\{1,2,3\}\times \{0,1,2\}\times\{3\}$, $\{3\}\times \{1, 2, 3\}\times\{0,1,2\}$,  $\{0,1,2\}\times \{1,2,3\}\times\{0\}$,   $\{0,1,2\}\times \{3\}\times\{1,2,3\}$,  $\{0\}\times \{0,1,2\}\times\{1,2,3\}$, two  $1\times 1\times 1$ subcubes,   $\{0\}\times\{0\}\times\{0\}$,  $\{3\}\times\{3\}\times\{3\}$, and one $2\times 2\times 2$ subcube,  $\{1,2\}\times\{1,2\}\times\{1,2\}$.}  \label{Figure:tite444}
 \end{figure}

\subsection{An SNOES and an SNOEB in $ 4\otimes 4\otimes 4$}

From the left figure of Fig.~\ref{Figure:tite444}, we can obtain an $\ghz$ basis in $ 3\otimes 3$,
\begin{equation}
\begin{aligned}
\{&\{\ket{0,0}+w_3^s\ket{1,1}+w_3^{2s}\ket{2,2}\}_{s=0,1,2},\\ 
& \{\ket{0,1}+w_3^s\ket{1,2}+w_3^{2s}\ket{2,0}\}_{s=0,1,2},\\
 &\{\ket{0,1}+w_3^s\ket{1,2}+w_3^{2s}\ket{2,0}\}_{s=0,1,2}\}.
\end{aligned}
\end{equation}
There are six $3\times 3\times 1$ subcubes in the right figure of Fig.~\ref{Figure:tite444}, $\{1,2,3\}\times \{0\}\times\{0,1,2\}$, $\{1,2,3\}\times \{0,1,2\}\times\{3\}$, $\{3\}\times \{1, 2, 3\}\times\{0,1,2\}$,  $\{0,1,2\}\times \{1,2,3\}\times\{0\}$,   $\{0,1,2\}\times \{3\}\times\{1,2,3\}$,  $\{0\}\times \{0,1,2\}\times\{1,2,3\}$. By the similar construction as Eq.~(\ref{eq:tile333}), we can obtain an orthogonal entangled set in $ 4\otimes 4\otimes 4$ from these six $3\times 3\times 1$ subcubes,

\begin{equation}\label{eq:tile444}
\begin{aligned}
\ket{\psi_{1,2,3}}&=\ket{1,0,0}+w_3^s\ket{2,0,1}+w_3^{2s}\ket{3,0,2},   \\  \ket{\psi_{4,5,6}}&=\ket{1,0,1}+w_3^s\ket{2,0,2}+w_3^{2s}\ket{3,0,0},   \\
\ket{\psi_{7,8,9}}&=\ket{1,0,2}+w_3^s\ket{2,0,0}+w_3^{2s}\ket{3,0,1},  \\
\ket{\psi_{10,11,12}}&=\ket{1,0,3}+w_3^s\ket{2,1,3}+w_3^{2s}\ket{3,2,3}, \\ 
\ket{\psi_{13,14,15}}&=\ket{1,1,3}+w_3^s\ket{2,2,3}+w_3^{2s}\ket{3,0,3},  \\
\ket{\psi_{16,17,18}}&=\ket{1,2,3}+w_3^s\ket{2,0,3}+w_3^{2s}\ket{3,1,3},  \\
\ket{\psi_{19,20,21}}&=\ket{3,1,0}+w_3^s\ket{3,2,1}+w_3^{2s}\ket{3,3,2},  \\
\ket{\psi_{22,23,24}}&=\ket{3,1,1}+w_3^s\ket{3,2,2}+w_3^{2s}\ket{3,3,0},   \\
\ket{\psi_{25,26,27}}&=\ket{3,1,2}+w_3^s\ket{3,2,0}+w_3^{2s}\ket{3,3,1},  \\
\ket{\psi_{28,29,30}}&=\ket{0,1,0}+w_3^s\ket{1,2,0}+w_3^{2s}\ket{2,3,0},  \\
\ket{\psi_{31,32,33}}&=\ket{0,2,0}+w_3^s\ket{1,3,0}+w_3^{2s}\ket{2,1,0},  \\
\ket{\psi_{34,35,36}}&=\ket{0,3,0}+w_3^s\ket{1,1,0}+w_3^{2s}\ket{2,2,0}, \\
\ket{\psi_{37,38,39}}&=\ket{0,3,1}+w_3^s\ket{1,3,2}+w_3^{2s}\ket{2,3,3}, \\
\ket{\psi_{40,41,42}}&=\ket{0,3,2}+w_3^s\ket{1,3,3}+w_3^{2s}\ket{2,3,1},  \\
\ket{\psi_{43,44,45}}&=\ket{0,3,3}+w_3^s\ket{1,3,1}+w_3^{2s}\ket{2,3,2}, \\
\ket{\psi_{46,47,48}}&=\ket{0,0,1}+w_3^s\ket{0,1,2}+w_3^{2s}\ket{0,2,3},   \\ 
\ket{\psi_{49,50,51}}&=\ket{0,0,2}+w_3^s\ket{0,1,3}+w_3^{2s}\ket{0,2,1},   \\
\ket{\psi_{52,53,54}}&=\ket{0,0,3}+w_3^s\ket{0,1,1}+w_3^{2s}\ket{0,2,2}, 
\end{aligned}
\end{equation}	
where $s=0,1,2$, $\ket{\psi_{j_1,j_2,j_3}}=\ket{k_1,k_2,k_3}+w_3^s\ket{k_4,k_5,k_6}+w_3^{2s}\ket{k_7,k_8,k_9}$ means  $\ket{\psi_{j_1}}=\ket{k_1,k_2,k_3}+\ket{k_4,k_5,k_6}+\ket{k_7,k_8,k_9}$, $\ket{\psi_{j_2}}=\ket{k_1,k_2,k_3}+w_3\ket{k_4,k_5,k_6}+w_3^2\ket{k_7,k_8,k_9}$, and $\ket{\psi_{j_3}}=\ket{k_1,k_2,k_3}+w_3^2\ket{k_4,k_5,k_6}+w_3\ket{k_7,k_8,k_9}$. We can also extend the orthogonal entangled set  $\{\ket{\psi_k}\}_{k=1}^{54}$ given by Eq.~(\ref{eq:tile444}) to an orthogonal entangled basis in $ 4\otimes 4\otimes 4$. Since there are two  $1\times 1\times 1$ subcubes,   $\{0\}\times\{0\}\times\{0\}$,  $\{3\}\times\{3\}\times\{3\}$, and one $2\times 2\times 2$ subcube,  $\{1,2\}\times\{1,2\}\times\{1,2\}$ left from the right figure of Fig.~\ref{Figure:tite444}, we can choose the following  genuinely entangled orthogonal states,
\begin{equation}\label{eq:tenghz}
\begin{aligned}
\ket{\psi_{55,56,57,58}}=&\{\ket{0,0,0}+w_4^k\ket{1,1,1}+w_4^{2k}\ket{2,2,2}\\
&+w_4^{3k}\ket{3,3,3}\}_{k=0}^{3},\\
\ket{\psi_{59,60}}=&\ket{1,2,2}\pm\ket{2,1,1},\\
 \ket{\psi_{61,62}}=&\ket{1,1,2}\pm\ket{2,2,1},\\
\ket{\psi_{63,64}}=&\ket{1,2,1}\pm\ket{2,1,2}.
\end{aligned}
\end{equation}
Then  $\{\psi_k\}_{k=1}^{64}$ given by Eqs.~(\ref{eq:tile444}) and (\ref{eq:tenghz}) forms an orthogonal entangled basis in $ 4\otimes 4\otimes 4$. In the following, we show that  $\{\ket{\psi_k}\}_{k=1}^{54}$ and $\{\psi_k\}_{k=1}^{64}$ are both strongly nonlocal.
	
\begin{lemma}\label{lem:strong444}
In $ 4\otimes 4\otimes 4$, the orthogonal entangled set $\{\ket{\psi_k}\}_{k=1}^{54}$ and the orthogonal entangled basis $\{\psi_k\}_{k=1}^{64}$ given by Eqs.~(\ref{eq:tile444}) and (\ref{eq:tenghz})  are both strongly nonlocal.
\end{lemma}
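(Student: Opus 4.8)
The plan is to follow the architecture of the proof of Lemma~\ref{lem:strong}, adapting it to the larger local dimension and to the three-term GHZ-type building blocks. First I would dispose of the basis by reduction: any orthogonality-preserving POVM for $\{\ket{\psi_k}\}_{k=1}^{64}$ is in particular orthogonality-preserving for the subset $\{\ket{\psi_k}\}_{k=1}^{54}$, so if the latter is locally irreducible the measurement is trivial and the whole basis is locally irreducible too. Hence it suffices to prove that $\{\ket{\psi_k}\}_{k=1}^{54}$ of Eq.~(\ref{eq:tile444}) is strongly nonlocal, and by the threefold symmetry of the cube in Fig.~\ref{Figure:tite444} it suffices to treat the bipartition $A|BC$. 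I would then fix a bijection from the product basis $\{\ket{p,q}\}_{p,q=0}^{3}$ of $BC$ onto the basis of $\bbC^{16}$, chosen so that the resulting $4\times 16$ grid is centrosymmetric, exactly in the spirit of Fig.~\ref{Figure:tite39}, rewriting the states as $\{\ket{\varphi_k}\}_{k=1}^{54}$ in $4\otimes 16$ and turning every orthogonality check into a statement about rows ($A$-index) and columns ($BC$-index).

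For the Alice-first case her POVM element $E_m=M_1^\dagger M_1$ is a $4\times 4$ matrix $(a_{i,j})$. The genuinely new ingredient relative to the Bell-based Lemma~\ref{lem:strong} is the cube-root phases. Within a triple whose three terms occupy distinct rows $i_1,i_2,i_3$, the conditions $\bra{\varphi_{j_1}}E_m\otimes I\ket{\varphi_{j_2}}=0$ and $\bra{\varphi_{j_1}}E_m\otimes I\ket{\varphi_{j_3}}=0$ become $a_{i_1,i_1}+w_3 a_{i_2,i_2}+w_3^2 a_{i_3,i_3}=0$ and $a_{i_1,i_1}+w_3^2 a_{i_2,i_2}+w_3 a_{i_3,i_3}=0$; since $1+w_3+w_3^2=0$ this is a nonsingular Fourier system forcing $a_{i_1,i_1}=a_{i_2,i_2}=a_{i_3,i_3}$. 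Running over all triples whose terms occupy three distinct rows chains the four diagonal entries into equality. For an off-diagonal entry $a_{i,j}$ I would, as in Lemma~\ref{lem:strong}, select two terms in rows $i$ and $j$ sharing a single $BC$-column that no other term of the two states involved occupies; the inner product then reduces to one multiple of $a_{i,j}$ and forces it to vanish, while Hermiticity supplies $a_{j,i}$. Thus $E_m$ is proportional to the identity and Alice cannot go first.

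The Bob--Charlie case is the main obstacle. Here $E_m'=M_2^\dagger M_2$ is a $16\times 16$ matrix $(b_{c,c'})$ and the phases play a more active role. For a triple with distinct rows, matching rows again gives the two equations $b_{c_1,c_1}+w_3 b_{c_2,c_2}+w_3^2 b_{c_3,c_3}=0$ and $b_{c_1,c_1}+w_3^2 b_{c_2,c_2}+w_3 b_{c_3,c_3}=0$, so the diagonal entries on that triple's three columns coincide; chaining through overlapping triples equates all sixteen diagonal entries. The off-diagonal entries come from the triples that are constant in the $A$-index (the block $\ket{\psi_{19}},\dots,\ket{\psi_{27}}$ with $A=3$, the block $\ket{\psi_{46}},\dots,\ket{\psi_{54}}$ with $A=0$, and suitable cross-block pairs sharing one row): for two such triples sharing the common row, letting $\ket{\varphi_p}$ range over the first triple and $\ket{\varphi_q}$ over the second, the nine numbers $\bra{\varphi_p}I\otimes E_m'\ket{\varphi_q}$ form a two-dimensional discrete Fourier transform of the $3\times 3$ block of $(b_{c,c'})$ indexed by the two $BC$-supports, so their vanishing kills the whole block at once. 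The centrosymmetry of the grid then returns the partner entries $b_{15-c,\,15-c'}$ for free.

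The crux is therefore combinatorial rather than analytic: I must verify that the row-sharing pairings reach \emph{every} off-diagonal position of the $16\times 16$ matrix and that the diagonal-chaining links \emph{all} sixteen columns, i.e.\ that the induced constraint graph on the columns is connected. This is the step where the specific layout of Eq.~(\ref{eq:tile444}) and the chosen bijection must be exploited, and where the bookkeeping is heaviest. Granting it, $E_m'$ is proportional to the identity, so Bob and Charlie cannot go first either. Local irreducibility across $A|BC$, and by the symmetry of Fig.~\ref{Figure:tite444} across every bipartition, then yields that $\{\ket{\psi_k}\}_{k=1}^{54}$, and hence the basis $\{\ket{\psi_k}\}_{k=1}^{64}$ of Eqs.~(\ref{eq:tile444}) and (\ref{eq:tenghz}), is strongly nonlocal.
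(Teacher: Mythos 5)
Your overall architecture coincides with the paper's: reduce the basis to the $54$-state set, use the symmetry of the cube to treat only $A|BC$, map $BC$ onto $\bbC^{16}$ so the states become the set $\{\ket{\varphi_k}\}_{k=1}^{54}$ of Eq.~(\ref{eq:tile416}) on a centrosymmetric $4\times 16$ grid (Fig.~\ref{Figure:tite416}), and eliminate entries of $E_m$ and $E_m'$ by Fourier-type arguments (the paper's Lemma~\ref{lem:varder}). Your Alice-first part is essentially the paper's, up to one imprecision: $a_{1,2},a_{2,3},a_{3,1}$ cannot each be isolated as ``one multiple of $a_{i,j}$,'' since any row-sharing pairing (e.g.\ $\{(1,0),(2,1),(3,2)\}$ with $\{(2,0),(3,1),(1,2)\}$) produces all three at once; they die together through the nonsingular Fourier system, which is harmless.

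The genuine gap is in the Bob--Charlie part, and it sits exactly at the step you concede with ``granting it.'' That verification is not mere bookkeeping: the toolkit you propose --- cross-triple $3\times3$ block-killing plus centrosymmetry --- provably cannot reach the entry $b_{7,8}$. In the grid of Eq.~(\ref{eq:tile416}), columns $7$ and $8$ (i.e.\ $\ket{1,1}_{BC}$ and $\ket{2,2}_{BC}$) co-occur with a common row index only \emph{inside single triples}: inside $\ket{\varphi_{52,53,54}}$ (cells $(0,3),(0,7),(0,8)$) and inside $\ket{\varphi_{22,23,24}}$ (cells $(3,7),(3,8),(3,12)$). The cells $(0,7),(0,8),(3,7),(3,8)$ belong to no other state, so no orthogonality condition between two \emph{distinct} states contains $b_{7,8}$ at all, and your cross-triple DFT mechanism never constrains it; centrosymmetry gives nothing new either, since the partner of $(7,8)$ is $(8,7)$, already its Hermitian conjugate. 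The paper closes this hole with an ingredient absent from your proposal: the orthogonality conditions \emph{within} the same-row triple $\{\ket{\psi_{52}},\ket{\psi_{53}},\ket{\psi_{54}}\}$, which couple $b_{7,8},b_{8,7}$ to the diagonal entries $b_{3,3},b_{7,7},b_{8,8}$, giving a four-equation linear system whose solution forces $b_{7,8}=b_{8,7}=0$. Relatedly, your diagonal chaining is incomplete as stated: the distinct-row triples only equate diagonals within the groups $\{0,1,2\}$, $\{3,4,5\}$, $\{10,11,12\}$, $\{13,14,15\}$; columns $6,7,8,9$ and the links between these groups come only from the same-row triples, whose within-triple conditions also involve off-diagonal unknowns, so they can be used for the diagonal only after those off-diagonals (in particular $b_{7,8}$) have been eliminated. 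So completing your proof requires this additional idea, not just the combinatorial audit you postponed.
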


The proof of Lemma~\ref{lem:strong444} is given in Appendix~\ref{appendix:lem444}. In next section, we consider SNOESs and SNOEBs in $ d\otimes d\otimes d$ for $d\geq 3$.
\begin{figure}[b]
	\centering
	\includegraphics[scale=0.4]{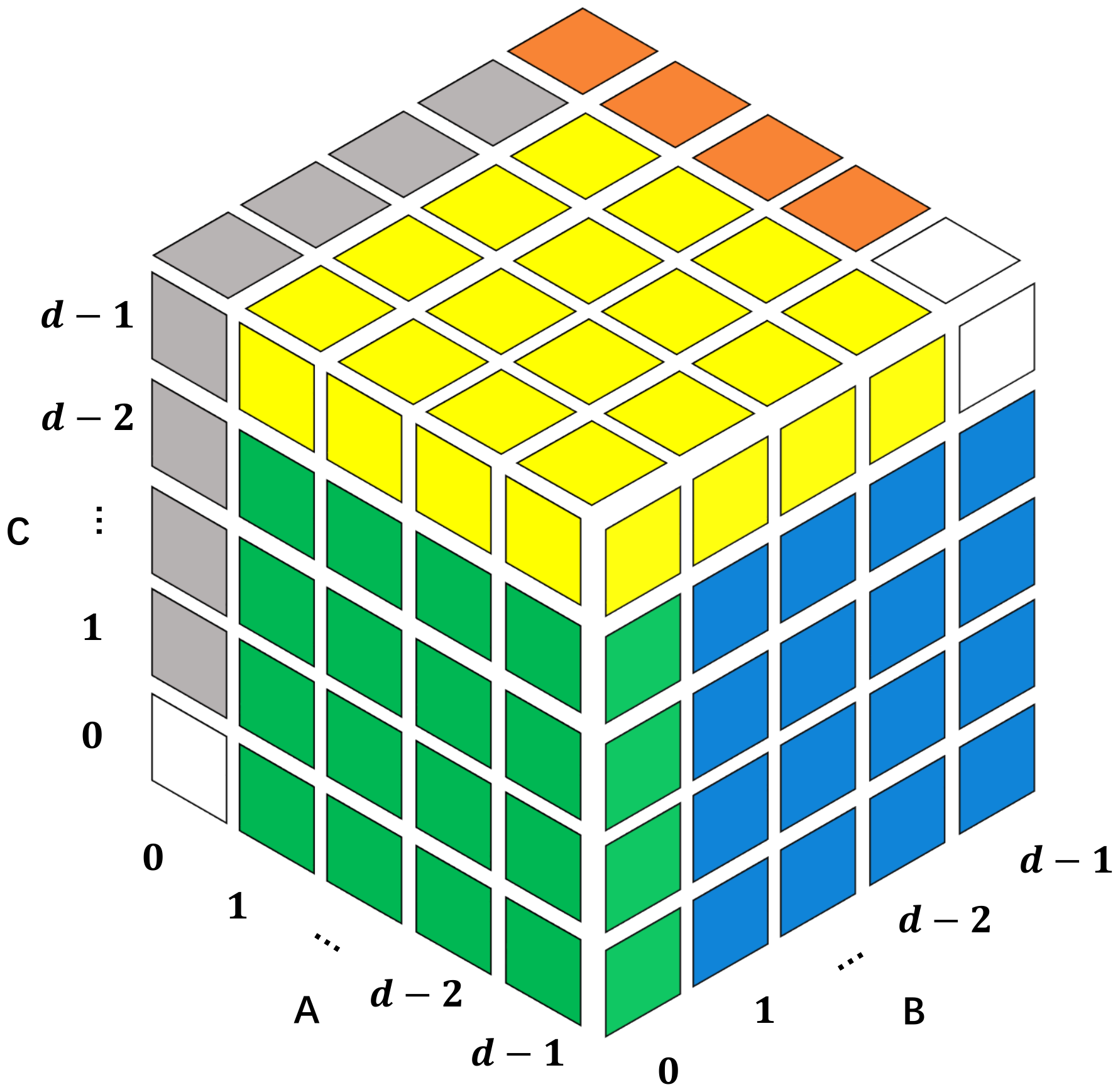}
	\caption{$d\times d\times d$ Rubik's cube, $d\geq 3$. Our construction is like peeling onions. We start from the most outside layer, for which the cells are partitioned as above. After we peel this layer, we get a $(d-2)\times(d-2)\times(d-2)$ Rubik's cube, and we can similarly partition the cells as above. Continue this procedure to the core, until we get a $3\times 3\times 3$ Rubik's cube when $d$ is odd, or a $4\times4\times4$ Rubik's cube when $d$ is even. Then we apply Fig.~\ref{Figure:tite333} and Fig.~\ref{Figure:tite444} to these two cores to get a complete partition of the $d\times d\times d$ Rubik's cube.}\label{Figure:555666}
\end{figure}	

\subsection{SNOESs and SNOEBs in $ d\otimes d\otimes d$ for $d\geq 3$}

 We give a general construction of SNOESs and SNOEBs in $ d\otimes d\otimes d$ for $d\geq 3$. Our construction is like peeling onions. We start from the most outside layer, for which the cells are partitioned as in Fig.~\ref{Figure:555666}. After we peel this layer, we get a $(d-2)\times(d-2)\times(d-2)$ Rubik's cube, and we can similarly partition the cells as in Fig.~\ref{Figure:555666}. Continue this procedure to the core, until we get a $3\times 3\times 3$ Rubik's cube when $d$ is odd, or a $4\times4\times4$ Rubik's cube when $d$ is even. Then we apply Fig.~\ref{Figure:tite333} and Fig.~\ref{Figure:tite444} to these two cores to get a complete partition of the $d\times d\times d$ Rubik's cube in Fig.~\ref{Figure:555666}.  In order to apply Lemmas~\ref{lem:strong} and \ref{lem:strong444} to the cores, we need to match the coordinates. Define a bijection: $\ket{0}\rightarrow\ket{\frac{d-3}{2}}$,  $\ket{1}\rightarrow\ket{\frac{d-1}{2}}$, $\ket{2}\rightarrow\ket{\frac{d+1}{2}}$ when $d$ is odd, and a bijection: $\ket{0}\rightarrow\ket{\frac{d-4}{2}}$,  $\ket{1}\rightarrow\ket{\frac{d-2}{2}}$,  $\ket{2}\rightarrow\ket{\frac{d}{2}}$,  $\ket{3}\rightarrow\ket{\frac{d+2}{2}}$  when $d$ is even. Then Eq.~(\ref{eq:tile333}) and  Eq.~(\ref{eq:tile444}) are mapped to orthogonal entangled sets $\{\ket{\psi_k'}\}_{k=1}^{24}$  and $\{\ket{\psi_k'}\}_{k=1}^{54}$ respectively.

%We can construct a  $d\times d\times d$ Rubik's cube for $d\geq 3$, and it has the similar structure as Fig.~\ref{Figure:tite333}.   When $d$ is odd, it contains $(d-2)\times (d-2)\times (d-2)$ Rubik's cube, $(d-4)\times (d-4)\times (d-4)$ Rubik's cube, $\ldots$ , $3\times 3\times 3$  Rubik's cube. All the Rubik's cubes have the  similar structure as Fig.~\ref{Figure:tite333}. In the same way, when $d$ is even, it contains $(d-2)\times (d-2)\times (d-2)$ Rubik's cube, $(d-4)\times (d-4)\times (d-4)$ Rubik's cube, $\ldots$ , $4\times 4\times 4$  Rubik's cube. See also Fig.~\ref{Figure:555666} for the structures. Define a bijection: $\ket{0}\rightarrow\ket{\frac{d-3}{2}}$,  $\ket{1}\rightarrow\ket{\frac{d-1}{2}}$, $\ket{2}\rightarrow\ket{\frac{d+1}{2}}$. Then the orthogonal entangled set $\{\ket{\psi_k}\}_{k=1}^{24}$ given by Eq.~(\ref{eq:tile333}) is transferred into orthogonal entangled set $\{\ket{\psi_k'}\}_{k=1}^{24}$. Define another bijection:  $\ket{0}\rightarrow\ket{\frac{d-4}{2}}$,  $\ket{1}\rightarrow\ket{\frac{d-2}{2}}$,  $\ket{2}\rightarrow\ket{\frac{d}{2}}$,  $\ket{3}\rightarrow\ket{\frac{d+2}{2}}$.   Then orthogonal entangled set $\{\ket{\psi_k}\}_{k=1}^{54}$ given by Eq.~(\ref{eq:tile444}) is transferred into orthogonal entangled set $\{\ket{\psi_k'}\}_{k=1}^{54}$.

Now, we have the following orthogonal entangled set in $ d\otimes d\otimes d$ based on Fig.~\ref{Figure:555666}:
\begin{equation}\label{eq:tileddd}
\cB^{(d)}=\left\{
\begin{aligned}
\{&\ket{\psi_{(k_1,t_1)}}=\sum_{j=0}^{d-2}w_{d-1}^{jk_1}\ket{j+1}\ket{0}\ket{j\oplus_{d-1}t_1},\\
&{0\leq k_1,t_1\leq d-2},\\
&\ket{\psi_{(k_2,t_2)}}=\sum_{j=0}^{d-2}w_{d-1}^{jk_2}\ket{j+1}\ket{j\oplus_{d-1}t_2}\ket{d-1},\\
&{0\leq k_2,t_2\leq d-2}, \\
&\ket{\psi_{(k_3,t_3)}}=\sum_{j=0}^{d-2}w_{d-1}^{jk_3}\ket{d-1}\ket{j+1}\ket{j\oplus_{d-1}t_3},\\
&{0\leq k_3,t_3\leq d-2}, \\
&\ket{\psi_{(k_4,t_4)}}=\sum_{j=0}^{d-2}w_{d-1}^{jk_4}\ket{j}\ket{(j\oplus_{d-1}t_4)+1}\ket{0},\\
&{0\leq k_4,t_4\leq d-2}, \\
&\ket{\psi_{(k_5,t_5)}}=\sum_{j=0}^{d-2}w_{d-1}^{jk_5}\ket{j}\ket{d-1}\ket{(j\oplus_{d-1}t_5)+1},\\
&{0\leq k_5,t_5\leq d-2},\\
&\ket{\psi_{(k_6,t_6)}}=\sum_{j=0}^{d-2}w_{d-1}^{jk_6}\ket{0}\ket{j}\ket{(j\oplus_{d-1}t_6)+1},\\
&{0\leq k_6,t_6\leq d-2}, \\
&\ldots\\
& \{\ket{\psi_k'}\}_{k=1}^{24} \ \text{when $d$ is odd}, \\
& (\{\ket{\psi_k'}\}_{k=1}^{54}  \ \text{when $d$ is even})\},
\end{aligned}\right.
\end{equation}
where $j\oplus_{d-1} t=j+t\mod(d-1)$. When $d$ is odd, the number of the entangled states in $\cB^{(d)}$ is $6[(d-1)^2+(d-3)^2+\cdots+2^2]=d^3-d$. We can add $d$ genuinely entangled states into $\cB^{(d)}$ to form an orthogonal entangled basis: 
\begin{equation}\label{eq:addodd}
\cB_1=\left\{\sum_{j=0}^{d-1}w_d^{jk}\ket{j,j,j}\right\}_{k=0}^{d-1}.
\end{equation}
When $d$ is even, the number of the entangled states in $\cB^{(d)}$ is $6[(d-1)^2+(d-3)^2+\cdots+3^2]=d^3-d-6$.  We can also add $d+6$ genuinely entangled states into $\cB^{(d)}$ to form an orthogonal entangled basis:
\begin{equation}\label{eq:addeven}
\begin{split}
\cB_2&=\left\{\sum_{j=0}^{d-1}w_d^{jk}\ket{j,j,j}\right\}_{k=0}^{d-1}\\
&\bigcup\{\ket{m-1,m,m}\pm\ket{m,m-1,m-1}\}\\
&\bigcup\{\ket{m-1,m-1,m}\pm\ket{m,m,m-1}\}\\
&\bigcup\{\ket{m-1,m,m-1}\pm\ket{m,m-1,m}\},
\end{split}
\end{equation}
where $m=\frac{d}{2}$.
Now, we show that $\cB^{(d)}$, $\cB^{(d)}\cup \cB_1$ and $\cB^{(d)}\cup \cB_2$ are strongly nonlocal.
	
\begin{theorem}\label{thm:generalddd}
In $ d\otimes d\otimes d$, $d\geq 3$,
\begin{enumerate}[(i)]
\item when $d$ is odd,  $\cB^{(d)}$ with $|\cB^{(d)}|=d^3-d$ given by Eq.~(\ref{eq:tileddd}) is an SNOES,  and  $\cB^{(d)}\cup \cB_1$ given by Eq.~(\ref{eq:tileddd}) and Eq.~(\ref{eq:addodd}) is an SNOEB;
\item when $d$ even,  $\cB^{(d)}$ with $|\cB^{(d)}|=d^3-d-6$ given by Eq.~(\ref{eq:tileddd}) is an SNOES,  and  $\cB^{(d)}\cup \cB_2$ given by Eq.~(\ref{eq:tileddd}) and Eq.~(\ref{eq:addeven}) is an SNOEB.
\end{enumerate}
\end{theorem}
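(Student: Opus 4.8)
The plan is to prove three things in order: that the listed vectors are orthonormal, entangled, and of the claimed cardinality; that strong nonlocality of the \emph{set} $\cB^{(d)}$ reduces by symmetry to local irreducibility across one cut; and that the set is indeed locally irreducible there. The first part is a direct verification. Because the subcubes in Eq.~(\ref{eq:tileddd}) together with the matched $3\times3\times3$ or $4\times4\times4$ core partition every cell of the $d\times d\times d$ cube exactly once, vectors supported on distinct subcubes are orthogonal by support; inside a single $(d-1)\times(d-1)\times1$ subcube the shift index $t$ selects disjoint cyclic diagonals and the index $k$ runs over the $(d-1)$th roots of unity, so the $(d-1)^2$ vectors form an orthonormal basis of that subcube. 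Each such vector is a $(d-1)$-cell superposition that is entangled across two cuts and product across the third (entangled but not genuinely entangled), while the appended $\cB_1,\cB_2$ vectors are genuinely entangled GHZ- and Bell-type states. Summing the layer contributions $6[(d-1)^2+(d-3)^2+\cdots]$, whose last term is $6\cdot2^2=24$ from the $3\times3\times3$ core for odd $d$ and $6\cdot3^2=54$ from the $4\times4\times4$ core for even $d$, gives $|\cB^{(d)}|=d^3-d$ and $d^3-d-6$; appending the $d$ (resp.\ $d+6$) vectors of Eq.~(\ref{eq:addodd}) (resp.\ Eq.~(\ref{eq:addeven})) yields exactly $d^3$ orthonormal vectors, hence a basis. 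Since enlarging a set only adds orthogonality constraints, local irreducibility of $\cB^{(d)}$ forces it for $\cB^{(d)}\cup\cB_1$ and $\cB^{(d)}\cup\cB_2$, so it suffices to treat $\cB^{(d)}$.

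For the reduction, the partition in Fig.~\ref{Figure:555666} is invariant under the cyclic relabeling $A\to B\to C\to A$ (the six outer subcubes permute cyclically and the cores inherit the symmetry of Fig.~\ref{Figure:tite333} and Fig.~\ref{Figure:tite444}), so I would prove local irreducibility only for $A|BC$ and invoke symmetry for $B|AC$ and $C|AB$. Flattening $BC$ into a single $d^2$-dimensional system exactly as in the proof of Lemma~\ref{lem:strong}, the task becomes to show that any orthogonality-preserving $E=M^\dagger M$ on $A$ and any orthogonality-preserving $E'=(M')^\dagger M'$ on $BC$ are both proportional to the identity.

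The key mechanism I would use is the root-of-unity structure, run family by family. Fix a family and a shift $t$; as the character index $k$ ranges over $0,\ldots,d-2$, the conditions $\langle\psi_{(k,t)}|E\otimes I|\psi_{(k',t)}\rangle=0$ form a discrete Fourier system (the $BC$-cells at fixed $t$ are pairwise orthogonal), whose only solution forces the $d-1$ diagonal entries of $E$ indexed by that family's $A$-rows to be equal. Comparing two shifts $t\neq t'$ in the same family produces a second Fourier system that annihilates the corresponding off-diagonal entries. The outer families of Eq.~(\ref{eq:tileddd}) touch the $A$-rows $\{1,\ldots,d-1\}$ and $\{0,\ldots,d-2\}$; taken together they equate all $d$ diagonal entries of $E$ and kill every off-diagonal entry except the corner pair linking rows $0$ and $d-1$. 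The same scheme, with the axes permuted and ``rows'' and ``columns'' exchanged, applies to the $d^2\times d^2$ matrix $E'$.

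The hard part will be exactly this residual coupling together with the passage to the core. No single outer-layer state carries both row $0$ and row $d-1$ in its $A$-support, so $\langle0|E|d-1\rangle$ is unreachable within one layer and must be closed off either by a cross-family relation through shared $BC$-columns or by descending one layer inward. This is where the onion recursion does the work: peeling the outer layer leaves a $(d-2)\times(d-2)\times(d-2)$ cube whose boundary coordinates overlap those of the peeled layer, and the bijection of Eq.~(\ref{eq:tileddd}) matches the innermost block to the coordinates on which Lemma~\ref{lem:strong} or Lemma~\ref{lem:strong444} already certifies scalarity. I would therefore organize the whole computation as an induction on the onion layers—each layer forcing its own rows to a common diagonal value with zero off-diagonal coupling, then tying that value to the inductive core through the shared boundary indices—and the crux is to verify that these overlaps genuinely connect every coordinate, so that a single global scalar emerges for both $E$ and $E'$.
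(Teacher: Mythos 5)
Your proposal follows essentially the same route as the paper's proof: reduce to the $A|BC$ cut via the cyclic symmetry of the partition and the subset-to-superset inheritance, pin down Alice's POVM through root-of-unity (Vandermonde) systems plus cross-family shared-column relations, and handle the $BC$ side by induction on the onion layers with Lemmas~\ref{lem:strong} and~\ref{lem:strong444} as the cores. The layer-tying step you flag as ``the crux'' is exactly what the paper verifies explicitly: the outer-layer states $\{\ket{\psi_{(k_6,0)}}\}_{k_6=0}^{d-2}$ have $BC$-columns straddling the outer layer and the inner core, which equates the two groups of diagonal entries of $E'$, and the centrosymmetry of Fig.~\ref{Figure:ddsquare} finishes the argument.
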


The proof of Theorem~\ref{thm:generalddd} is given in Appendix~\ref{appendix:lemddd}. By Theorem~\ref{thm:generalddd}, we can find SNOEBs in $d\otimes d\otimes d$ for $d\geq 3$, which answers an open question in Ref.~\cite{Halder2019Strong}. From the proof of Theorem~\ref{thm:generalddd}, we know that if there exists an SNOES of size $s$ in $d\otimes d\otimes d$, then there exists an SNOES of size $s+6(d+1)^2$ in $(d+2)\otimes (d+2)\otimes (d+2)$. A similar structure of Fig.~\ref{Figure:555666} also appears in Ref.~\cite{Agrawal2019Genuinely}, where the authors used it to construct unextendible product bases (UPBs).

\section{Entanglement-assisted discrimination}
\label{sec:discri-entang}
	
In this section, we consider local discrimination of SNOESs using entanglement as a resource.  Since SNOESs cannot be locally distinguished in every bipartition, a perfect local discrimination of this set would require a resource state that must be entangled in all bipartitions. Assume  $\cB$ is an SNOES in $ d\otimes d\otimes d$. Let Alice and Bob share a  maximally entangled state (MES) $\ket{\phi(d)}=\sum_{k=0}^{d-1}\ket{k,k}$, Alice and Charlie also share $\ket{\phi(d)}$. Using the  MES $\ket{\phi(d)}$, Bob (Charlie) can  teleport his subsystem to Alice  \cite{Bennett1993Teleporting,Sumit2019Genuinely}. Then Alice  can perfectly discriminate $\cB$ by performing a suitable measurement. In this teleportation-based protocol, it consumes $2\log_2 d$ ebits entanglement resource.  A protocol consuming less entanglement than the teleportation-based protocol is desirable, since entanglement is a costly resource under the operational paradigm of LOCC. We give two entanglement-assisted discrimination protocols for the SNOES in $ 3\otimes 3\otimes 3$ in Proposition~\ref{pro:distinguish39} and Proposition~\ref{pro:distinguish333}, Each  protocol consumes less entanglement resource than the teleportation-based protocol averagely.
	
First, we give an entanglement-assisted discrimination protocol for Bell basis in $ 2\otimes 2$ by using a two-qubit MES. It is different from the teleportation-based protocol.
\begin{example}\label{example:belldisting}
The Bell basis in $ 2\otimes 2$ can be locally distinguished by using a two-qubit MES.	
The initial states are
\begin{equation}
\begin{aligned}
\ket{\psi_{1,2}}&=(\ket{0,0}\pm\ket{1,1})_{A,B}(\ket{0,0}+\ket{1,1})_{a,b},  \\
\ket{\psi_{3,4}}&=(\ket{0,1}\pm\ket{1,0})_{A,B}(\ket{0,0}+\ket{1,1})_{a,b},
\end{aligned}
\end{equation}
where $a$ and $b$ are the ancillary systems of Alice and Bob, respectively.
Denote  $P[\ket{i}_\spadesuit]:=\ketbra{i}{i}_\spadesuit$,  $P[\ket{i}_\spadesuit;\ket{j}_\clubsuit]:=\ketbra{i}{i}_\spadesuit\otimes\ketbra{j}{j}_\clubsuit$, and $P[(\ket{k},\ket{\ell})_\spadesuit;(\ket{m},\ket{n})_\clubsuit]:=(\ketbra{k}{k}+\ketbra{\ell}{\ell})_\spadesuit\otimes (\ketbra{m}{m}+\ketbra{n}{n})_\clubsuit$.  Now the discrimination protocol proceeds as follows.
		
\textit{Step 1}.  Alice performs the measurement $\{N_1:=P[\ket{0}_A;\ket{0}_a]+P[\ket{1}_A;\ket{1}_a], \overline{N_1}=I-N_1\}$.  If $N_1$ clicks, the resulting postmeasurement states are
\begin{equation}
\begin{aligned}
\ket{\psi_{1,2}}&\rightarrow\ket{0,0}_{A,B}\ket{0,0}_{a,b}\pm\ket{1,1}_{A,B}\ket{1,1}_{a,b},  \\
\ket{\psi_{3,4}}&\rightarrow\ket{0,1}_{A,B}\ket{0,0}_{a,b}\pm\ket{1,0})_{A,B}\ket{1,1}_{a,b}.
\end{aligned}
\end{equation}
		
\textit{Step 2}.  Bob performs the measurement $\{N_2:=P[\ket{0}_B;\ket{0}_b]+P[\ket{1}_B;\ket{1}_b], \overline{N_2}=I-N_2\}$. If $N_2$ clicks, it remains $\ket{\psi_{1,2}}$, which can be locally distinguished \cite{walgate2000local}.  Otherwise, he performs $\overline{N_2}$, and it remains the locally distinguishable set $\ket{\psi_{3,4}}$. If $\overline{N_1}$ clicks in the step 1, we can obtain a  similar protocol.
	\end{example}
	
For a tripartite system,    the configuration of entanglement resources can be described by $\{(p,\ket{\phi(d_1)})_{A,B}, (q,\ket{\phi(d_2)})_{A,C},(r,\ket{\phi(d_3)})_{B,C}\}$ \cite{Sumit2019Genuinely}, where $(p,\ket{\phi(d_1)})_{A,B}$ means that an amount $p$ of the MES $\ket{\phi(d_1)}=\sum_{k=0}^{d_1-1}\ket{k,k}$  is consumed between Alice and Bob averagely, and similarly for $(q,\ket{\phi(2)})_{A,C} $ and $(r,\ket{\phi(d_3)})_{B,C}$. Next, we give a protocol for the SNOES $\{\ket{\psi_k}\}_{k=1}^{24}$ given by Eq.~(\ref{eq:tile333}) in $3\otimes 3\otimes 3$. 
\begin{proposition}\label{pro:distinguish39}
The SNOES  $\{\ket{\psi_k}\}_{k=1}^{24}$ given by Eq.~(\ref{eq:tile333}) can be locally distinguished by using  $\{(\frac{4}{3},\ket{\phi(2)})_{A,B}, (0,\ket{\phi(2)})_{A,C},(1,\ket{\phi(3)})_{B,C}\}$, where $(\frac{4}{3},\ket{\phi(2)})_{A,B}$ means that two $\ket{\phi(2)}$ are distributed between Alice and Bob, and $\frac{4}{3}$ $\ket{\phi(2)}$ are actually consumed.
\end{proposition}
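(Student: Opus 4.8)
The plan is to split the resource into its two nontrivial parts and use them in two stages. The first stage uses the qutrit maximally entangled state $\ket{\phi(3)}_{B,C}$: since Alice and Charlie share nothing, the only way to gather Charlie's information is to have Charlie teleport his qutrit to Bob \cite{Bennett1993Teleporting}, which consumes $\ket{\phi(3)}_{B,C}$ in full and accounts both for the summand $(1,\ket{\phi(3)})_{B,C}$ and for the vanishing $A$--$C$ contribution. After this teleportation Bob holds the whole $BC$ system, and applying the bijection $3\otimes3\to\bbC^9$ of Lemma~\ref{lem:strong} turns the $24$ states of Eq.~(\ref{eq:tile333}) into the states $\{\ket{\varphi_k}\}_{k=1}^{24}$ of Eq.~(\ref{eq:tile39}) living in $3\otimes9$, with Alice holding the qutrit and Bob the nine-dimensional system. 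The whole problem is thereby reduced to locally distinguishing $\{\ket{\varphi_k}\}_{k=1}^{24}$ in $3\otimes9$ using only the $A$--$B$ resource, and it remains to do this at an average cost of $\tfrac43$ of a $\ket{\phi(2)}$.

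The second stage mimics the two-round construction of Example~\ref{example:belldisting}. I would distribute two copies of $\ket{\phi(2)}$ between Alice and Bob and let Alice go first: she attaches her half of one copy and performs an orthogonality-preserving two-outcome measurement of the type $\{N_1,\overline{N_1}\}$ that couples her computational basis to the shared ancilla, so that the relative phase of each pair $\ket{\varphi_{2\ell-1}},\ket{\varphi_{2\ell}}=\ket{a_1,j_1}\pm\ket{a_2,j_2}$ --- which sits across the $A|B$ cut and cannot be read locally --- is transferred into a correlation Bob can later measure, without collapsing the superposition. The states of Eq.~(\ref{eq:tile39}) fall into two kinds: the eight \emph{singleton} states $\ket{\varphi_9},\dots,\ket{\varphi_{12}},\ket{\varphi_{21}},\dots,\ket{\varphi_{24}}$, for which Alice's index is constant and the state is a product $\ket{a}_A\otimes(\ket{j_1}\pm\ket{j_2})_B$, and the sixteen \emph{coupled} states, which are genuine Bell-type states in a $2\times2$ block. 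The measurement $\{N_1,\overline{N_1}\}$ is chosen so that in the branch landing on the singleton supports the postmeasurement ensemble is already a product ensemble that Bob distinguishes with a local von Neumann measurement and no further entanglement, while in the complementary branch the second copy of $\ket{\phi(2)}$ is used to resolve the remaining $2\times2$ Bell phases.

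For the accounting, one copy of $\ket{\phi(2)}$ is consumed in the first round, and the second copy is needed only on the coupled branch; choosing the split of $\{N_1,\overline{N_1}\}$ so that the second copy is returned unused with probability $\tfrac23$ (equivalently, that on average $\tfrac23$ ebit of the $A$--$B$ resource is recovered) gives an average $A$--$B$ cost of $2-\tfrac23=\tfrac43$ copies of $\ket{\phi(2)}$, as claimed. Summing the stages, the protocol consumes $\tfrac43+\log_2 3$ ebits, strictly below the $2\log_2 3$ ebits of the teleportation-based protocol.

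The main obstacle is the explicit design and verification of the second-stage measurements. Because the singleton and coupled supports share both Alice-indices (the value $1$ occurs in two coupling classes) and Bob-indices, there is no projective measurement separating them without destroying coherence; the operators $N_1,\overline{N_1}$ and Bob's follow-up measurements must therefore be constructed so as to be simultaneously valid POVM elements, orthogonality-preserving on the whole set, and tuned to return the second $\ket{\phi(2)}$ with probability exactly $\tfrac23$. Checking these three requirements together is the crux; once the operators are written down, the orthogonality relations and the probability bookkeeping are routine, exactly as in Example~\ref{example:belldisting} and Ref.~\cite{Sumit2019Genuinely}.
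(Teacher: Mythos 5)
Your first stage (Charlie teleports his qutrit to Bob with $\ket{\phi(3)}_{B,C}$, reducing the problem to distinguishing $\{\ket{\varphi_k}\}_{k=1}^{24}$ of Eq.~(\ref{eq:tile39}) in $3\otimes 9$ with only the $A$--$B$ resource) is exactly the paper's first move. The gap is in the second stage, and it is not only the missing explicit operators that you flag as ``the crux'': your entanglement accounting is internally inconsistent, and repairing it requires an idea your plan does not contain. You partition the $24$ states into $8$ singletons ($\varphi_9,\dots,\varphi_{12},\varphi_{21},\dots,\varphi_{24}$, Alice's index constant) and $16$ coupled Bell-type states, and you stipulate that the second $\ket{\phi(2)}$ is consumed exactly on the coupled branch. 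Under the uniform prior that branch has probability $\tfrac{16}{24}=\tfrac{2}{3}$, so the second copy would be consumed with probability $\tfrac{2}{3}$, for an average $A$--$B$ cost of $1+\tfrac{2}{3}=\tfrac{5}{3}$ copies, not $\tfrac{4}{3}$; your statement that the second copy is ``returned unused with probability $\tfrac{2}{3}$'' contradicts your own partition, and no tuning of a two-outcome first measurement can fix this as long as all sixteen coupled states are routed to the ebit-consuming branch. (Note also that $\tfrac{5}{3}$ copies of $\ket{\phi(2)}$ plus one $\ket{\phi(3)}$ is $\tfrac{5}{3}+\log_2 3>2\log_2 3$ ebits, so the resulting protocol would not even beat the teleportation benchmark, which is the point of the proposition.)

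The missing idea is that the sixteen coupled states must be treated asymmetrically, and Alice's first measurement is what creates the asymmetry. In the paper's proof Alice measures $\{K_1:=P[\ket{0}_A;\ket{0}_a]+P[(\ket{1},\ket{2})_A;\ket{1}_a],\ \overline{K_1}:=I-K_1\}$, correlating the split $\{0\}$ versus $\{1,2\}$ of her basis with her half of the first ebit. For the coupled states supported on $A$-indices $\{1,2\}$, namely $\varphi_1,\dots,\varphi_8$, the ancillas collapse to the product $\ket{1,1}_{a,b}$: the first ebit gives them no help, they remain complete Bell bases in $2\times 2$ blocks, and only they require the second copy --- $8$ states, probability $\tfrac{8}{24}=\tfrac{1}{3}$. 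For the coupled states straddling $A$-indices $\{0,1\}$, namely $\varphi_{13},\dots,\varphi_{20}$, the coherence of the first ebit is transferred into the state, e.g. $\ket{\varphi_{13,14}}\to\ket{0,5}_{A,\wi{B}}\ket{0,0}_{a,b}\pm\ket{1,6}_{A,\wi{B}}\ket{1,1}_{a,b}$; Bob's subsequent orthogonality-preserving measurements (the operators $K_{2,3}$, $K_{4,1}$, $K_{4,2}$, $\overline{K_4}$ in the paper) then isolate each $\pm$ pair, and two orthogonal pure states can always be distinguished by LOCC with no further entanglement \cite{walgate2000local}. This asymmetry is what yields the claimed average $1+2\times\tfrac{4}{24}=\tfrac{4}{3}$ on the $A$--$B$ link. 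Since your proposal treats all sixteen coupled states alike and defers the construction of $N_1,\overline{N_1}$ entirely, it does not establish the proposition as stated.
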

\begin{proof}
First, Charlie teleports his subsystem to Bob by using  the entanglement resource $\ket{\phi(3)}$. Then $\{\ket{\psi_k}\}_{k=1}^{24}$ given by Eq.~(\ref{eq:tile333}) is transformed into  $\{\ket{\varphi_k}\}_{k=1}^{24}$ given by Eq.~(\ref{eq:tile39}). We use the subindex $\wi{B}$ for this union of Bob and Charlie. The  two $\ket{\phi(2)}$ are distributed between Alice and Bob.  The initial state is
\begin{equation}
\ket{\varphi}_{{A,\wi{B}}}\otimes\ket{\phi(2)}_{a,b}\otimes\ket{\phi(2)}_{a_1,b_1}.
\end{equation}
where $a$ and $a_1$ are the ancillary systems of Alice, $b$ and $b_1$ are the ancillary systems of Bob. Now the discrimination protocol proceeds as follows.
		
\textit{Step 1}.   Alice performs the measurement $\{K_1:=P[\ket{0}_A;\ket{0}_a]+P[(\ket{1},\ket{2})_A;\ket{1}_a], \overline{K_1}:=I-K_1\}$.  If $K_1$ clicks, the resulting postmeasurement states are
\begin{equation*}
\begin{aligned}
\ket{\psi_{1,2}}\rightarrow&(\ket{1,0}\pm\ket{2,1})_{{A,\wi{B}}}\ket{1,1}_{a,b}\ket{\phi(2)}_{a_1,b_1}, \\ \ket{\psi_{3,4}}\rightarrow&(\ket{1,1}\pm\ket{2,0})_{{A,\wi{B}}}\ket{1,1}_{a,b}\ket{\phi(2)}_{a_1,b_1} ,\\
\ket{\psi_{5,6}}\rightarrow&(\ket{1,2}\pm\ket{2,3})_{{A,\wi{B}}}\ket{1,1}_{a,b}\ket{\phi(2)}_{a_1,b_1}, \\ \ket{\psi_{7,8}}\rightarrow&(\ket{1,3}\pm\ket{2,2})_{{A,\wi{B}}}\ket{1,1}_{a,b}\ket{\phi(2)}_{a_1,b_1} ,\\
\ket{\psi_{9,10}}\rightarrow&(\ket{2,5}\pm\ket{2,7})_{{A,\wi{B}}}\ket{1,1}_{a,b}\ket{\phi(2)}_{a_1,b_1},\\ \ket{\psi_{11,12}}\rightarrow&(\ket{2,4}\pm\ket{2,6})_{{A,\wi{B}}}\ket{1,1}_{a,b}\ket{\phi(2)}_{a_1,b_1} ,\\
\ket{\psi_{13,14}}\rightarrow&(\ket{0,5}_{{A,\wi{B}}}\ket{0,0}_{a,b}\pm\ket{1,6}_{{A,\wi{B}}}\ket{1,1}_{a,b})\\
&\otimes\ket{\phi(2)}_{a_1,b_1},\\
\ket{\psi_{15,16}}\rightarrow&(\ket{0,6}_{{A,\wi{B}}}\ket{0,0}_{a,b}\pm\ket{1,5}_{{A,\wi{B}}}\ket{1,1}_{a,b})\\
&\otimes\ket{\phi(2)}_{a_1,b_1},
\end{aligned}
\end{equation*}
\begin{align}
\notag
\ket{\psi_{17,18}}\rightarrow&(\ket{0,7}_{{A,\wi{B}}}\ket{0,0}_{a,b}\pm\ket{1,8}_{{A,\wi{B}}}\ket{1,1}_{a,b})\\
\notag
&\otimes\ket{\phi(2)}_{a_1,b_1},\\
\notag
 \ket{\psi_{19,20}}\rightarrow&(\ket{0,8}_{{A,\wi{B}}}\ket{0,0}_{a,b}\pm\ket{1,7}_{{A,\wi{B}}}\ket{1,1}_{a,b})\\
 \notag
&\otimes\ket{\phi(2)}_{a_1,b_1} ,\\
\notag
\ket{\psi_{21,22}}\rightarrow&(\ket{0,1}\pm\ket{0,3})_{{A,\wi{B}}}\ket{0,0}_{a,b}\ket{\phi(2)}_{a_1,b_1},\\ \ket{\psi_{23,24}}\rightarrow&(\ket{0,2}\pm\ket{0,4})_{{A,\wi{B}}}\ket{0,0}_{a,b}\ket{\phi(2)}_{a_1,b_1}.
\end{align}

\textit{Step 2}.   Bob performs the measurement $\{K_{2,1}:=P[(\ket{1},\ket{3})_{\wi{B}};\ket{0}_b], K_{2,2}:=P[(\ket{2},\ket{4})_{\wi{B}};\ket{0}_b], K_{2,3}:=P[\ket{7}_{\wi{B}};\ket{0}_b]+P[\ket{8}_{\wi{B}};\ket{1}_b], K_{2,4}:=P[(\ket{0},\ket{1})_{\wi{B}};\ket{1}_b], K_{2,5}:=P[(\ket{2},\ket{3})_{\wi{B}};\ket{1}_b], \overline{K_2}:=I-K_{2,1}-K_{2,2}-K_{2,3}-K_{2,4}-K_{2,5}\}$. If $K_{2,1}$ clicks, it remains $\ket{\psi_{21,22}}$; if $K_{2,2}$ clicks, it remains $\ket{\psi_{23,24}}$; if $K_{2,3}$ clicks, it remains $\ket{\psi_{17,18}}$; If $K_{2,4}$ clicks, it remains $\ket{\psi_{1,2,3,4}}$. These four states can be locally distinguished by using $\ket{\phi(2)}_{a_1,b_1}$ (see Example~\ref{example:belldisting}). If $K_{2,5}$ clicks, it remains $\ket{\psi_{5,6,7,8}}$. These four states can also be locally distinguished by using $\ket{\phi(2)}_{a_1,b_1}$; if $\overline{K_2}$ clicks, it remains $\{\ket{\psi_k}_{k=9}^{16}\bigcup\ket{\psi_{19,20}}\}$.
			
\textit{Step 3}. Alice performs the measurement $\{K_{3}:=P[\ket{2}_A], \overline{K_3}:=I-K_{3}\}$. If $K_{3}$ clicks, it remains $\ket{\psi_{9,10,11,12}}$. Bob can distinguish these four product states; if $\overline{K_3}$ clicks, it remains $\{\ket{\psi_k}_{k=13}^{16}\bigcup\ket{\psi_{19,20}}\}$.
		
\textit{Step 4}. Bob performs the measurement $\{K_{4,1}:=P[\ket{8}_{\wi{B}};\ket{0}_b]+P[\ket{7}_{\wi{B}};\ket{1}_b], K_{4,2}:=P[\ket{5}_{\wi{B}};\ket{0}_b]+P[\ket{6}_{\wi{B}};\ket{1}_b], \overline{K_4}:=I-K_{4,1}-K_{4,2}\}$. If $K_{4,1}$ clicks, it remains $\ket{\psi_{19,20}}$; if $K_{4,2}$ clicks, it remains $\ket{\psi_{13,14}}$; if  $\overline{K_4}$ clicks, it remains $\ket{\psi_{15,16}}$.
		
If $\overline{K_1}$ clicks in step 1, we can obtain a  similar protocol. From the beginning to step 4,  it consumes $1$  $\ket{\phi(3)}$  between Alice and Charlie, and  $1+2\times \frac{4}{24}=\frac{4}{3}$ $\ket{\phi(2)}$  between Alice and Bob averagely.
\end{proof}
\vspace{0.5cm}
The protocol in Proposition~\ref{pro:distinguish39}  consumes $\frac{4}{3}+\log_23$ ebits  entanglement resource averagely, which is strictly less than $2\log_23$. It means that this protocol consumes less entanglement resource than the teleportation-based protocol. However, in the protocol in Proposition~\ref{pro:distinguish39}, the tripartite system becomes the bipartite system. Since it uses the teleportation-based protocol between Bob and Charlie. In the following, we give a  more efficient protocol when the three parties are separated.
	
\begin{proposition}\label{pro:distinguish333}
The SNOES  $\{\ket{\psi_k}\}_{k=1}^{24}$ given by Eq.~(\ref{eq:tile333}) can be locally distinguished by using  $\{(\frac{7}{6},\ket{\phi(2)})_{A,B}, (\frac{7}{6},\ket{\phi(2)})_{A,C},(\frac{1}{6},\ket{\phi(2)})_{B,C}\}$, where $(\frac{7}{6},\ket{\phi(2)})_{A,B}$ means that two $\ket{\phi(2)}$ are distributed between Alice and Bob, and $\frac{7}{6}$ $\ket{\phi(2)}$ are actually consumed, and similarly for  $(\frac{7}{6},\ket{\phi(2)})_{A,C}$ and $(\frac{1}{6},\ket{\phi(2)})_{B,C}$.
\end{proposition}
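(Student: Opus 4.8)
The plan is to exhibit an explicit sequential LOCC protocol in the same style as Example~\ref{example:belldisting} and Proposition~\ref{pro:distinguish39}, but now keeping Alice, Bob and Charlie genuinely separated instead of first teleporting Charlie's qutrit to Bob. I attach the distributed resources as ancillas and write the global input as $\ket{\psi}_{A,B,C}\otimes\ket{\phi(2)}_{a,b}\otimes\ket{\phi(2)}_{a',b'}\otimes\ket{\phi(2)}_{\hat a,c}\otimes\ket{\phi(2)}_{\hat a',c'}\otimes\ket{\phi(2)}_{\hat b,\hat c}$, the two $A$-$B$ copies living on $(a,b),(a',b')$, the two $A$-$C$ copies on $(\hat a,c),(\hat a',c')$, and the single $B$-$C$ copy on $(\hat b,\hat c)$. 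Each round is an orthogonality-preserving POVM by one party; as in Example~\ref{example:belldisting}, its nontrivial elements coherently lock a local computational-basis value to the matching half of a shared MES, so that the outcome both splits the list of candidate states into disjoint sublists and sets up the eventual local resolution of the surviving $\pm$ pairs. I iterate until every branch is a single Bell-type pair $\ket{u}\pm\ket{v}$, which is then finished with one shared MES exactly as in Example~\ref{example:belldisting}.

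The accounting hinges on the cyclic symmetry of Fig.~\ref{Figure:tite333}: among the $24$ states of Eq.~(\ref{eq:tile333}), exactly eight are entangled across each pair and product across the third party. Explicitly, $\ket{\psi_{1,2,3,4}}$ and $\ket{\psi_{17,\ldots,20}}$ are entangled across $A$-$C$ (product in $B$); $\ket{\psi_{5,6,7,8}}$ and $\ket{\psi_{13,\ldots,16}}$ across $A$-$B$ (product in $C$); and $\ket{\psi_{9,10,11,12}}$ together with $\ket{\psi_{21,\ldots,24}}$ across $B$-$C$ (product in $A$). The $A$-$B$- and $A$-$C$-type pairs are resolved directly through Alice's links. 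The subtlety is the eight $B$-$C$-type pairs: the single $B$-$C$ copy is meant to be spent on only four of them, so the other four must be resolved by routing their correlation through Alice, teleporting one of $B,C$ into Alice's lab over an $A$-$C$ (or $A$-$B$) copy and then settling the residual correlation over an $A$-$B$ (or $A$-$C$) copy, which is precisely why the $A$-$B$ and $A$-$C$ budgets exceed one.

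Concretely I would let Alice measure first with a coarse, ancilla-tagging projector reading off whether her index is $0,1$ or $2$, in analogy with $K_1$ of Proposition~\ref{pro:distinguish39}; since every state has its $A$-index confined to at most two values, this already isolates the $A$-product families and fixes which partner index each survivor carries. Bob's and Charlie's follow-up tagging measurements then sieve each branch down to singleton $\pm$ pairs, orthogonality being preserved at every step by the disjointness of the subcubes of Fig.~\ref{Figure:tite333}, exactly as in Lemma~\ref{lem:strong}. The first $A$-$B$ copy and the first $A$-$C$ copy are consumed generically by this tagging, contributing $1$ to each of the $A$-$B$ and $A$-$C$ counts; the second copy of each and the $B$-$C$ copy are drawn only on the specific four-state subfamilies that survive to a final Bell-resolution, contributing $\tfrac{4}{24}=\tfrac16$ to each of $A$-$B$, $A$-$C$ and $B$-$C$.

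The main obstacle is that these three fractional budgets are coupled and must be hit simultaneously: rerouting a $B$-$C$-type pair to save the scarce $B$-$C$ copy spends an $A$-$B$ and an $A$-$C$ copy instead, so the measurement tree has to be balanced so that exactly four states fall to each of the three second-resolution channels while no branch ever merges non-orthogonal survivors. Once such a tree is displayed and its leaf costs tabulated, summing over the uniform prior gives $1+\tfrac16=\tfrac76$ for each of $A$-$B$ and $A$-$C$ and $\tfrac16$ for $B$-$C$, i.e. the configuration $\{(\tfrac76,\ket{\phi(2)})_{A,B},(\tfrac76,\ket{\phi(2)})_{A,C},(\tfrac16,\ket{\phi(2)})_{B,C}\}$; the total $\tfrac76+\tfrac76+\tfrac16=\tfrac52$ is strictly below $2\log_2 3$, improving on both teleportation and Proposition~\ref{pro:distinguish39}.
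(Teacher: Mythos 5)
Your structural observations are correct --- the split of the $24$ states into eight $A$--$B$-, eight $A$--$C$-, and eight $B$--$C$-entangled pairs matches Eq.~(\ref{eq:tile333}), and the arithmetic $1+\frac{1}{6}=\frac{7}{6}$ is the right accounting --- but the proof has a genuine gap: the measurement tree is never exhibited, and the one concrete design choice you commit to (Alice tags first, Bob and Charlie then ``sieve'') cannot be completed within the stated budget. The reason is that tagging by itself never eliminates states; elimination happens only when some party later performs a readout, and that readout must not break the orthogonality of any surviving pair. Suppose Alice has tagged her index, so that Bob's half $b$ of the first $A$--$B$ copy carries, say, $A=0\to b=0$, $A\in\{1,2\}\to b=1$ (the other coarse-graining fails identically). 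Track Bob's joint patterns $(B,b)$ over the two branches of each pair: $\ket{\psi_{5,6}}$ forces $(0,1)\sim(1,1)$, $\ket{\psi_{9,10}}$ forces $(1,1)\sim(2,1)$, $\ket{\psi_{13,14}}$ forces $(1,0)\sim(2,1)$, $\ket{\psi_{15,16}}$ forces $(2,0)\sim(1,1)$, and $\ket{\psi_{21,22}}$ forces $(0,0)\sim(1,0)$. All six patterns lie in a single class, so by exactly the constraint-propagation used in Lemma~\ref{lem:strong}, any orthogonality-preserving measurement Bob can make on $(B,b)$ is trivial; the same computation on $(C,c)$ shows Charlie is equally stuck. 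The only escape is for Bob or Charlie to tag with a fresh copy (the $B$--$C$ copy or a second copy) \emph{before} any branch has been isolated; but an unconditional tagging collapses that MES in every branch, so its average consumption becomes $1$ rather than $\frac{1}{6}$, and the claimed configuration $\{(\frac{7}{6},\ket{\phi(2)})_{A,B},(\frac{7}{6},\ket{\phi(2)})_{A,C},(\frac{1}{6},\ket{\phi(2)})_{B,C}\}$ is lost.

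The paper's protocol works precisely because the roles are the reverse of yours: Bob and Charlie tag first (Bob locks $B\in\{0\}$ vs $\{1,2\}$ to $b_1$, Charlie locks $C\in\{0,1\}$ vs $\{2\}$ to $c_1$, consuming the two first copies generically), and then Alice --- who holds a half of \emph{each} tag --- performs the branch-splitting readout with elements such as $M_{2,1}:=P[(\ket{1},\ket{2})_A;\ket{0}_{a_1};\ket{0}_{a_2}]$ and $M_{2,2}:=P[\ket{2}_A;\ket{1}_{a_1};\ket{0}_{a_2}]$; each element contains both branches of every pair it keeps and annihilates the others outright, so orthogonality survives. Only after this split are the scarce resources touched, each in exactly one four-state branch: the second $A$--$C$ copy on $\ket{\psi_{1,\ldots,4}}$, the $B$--$C$ copy on $\ket{\psi_{9,\ldots,12}}$, and the second $A$--$B$ copy on $\ket{\psi_{13,\ldots,16}}$. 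Note also that your teleportation-based routing of $\ket{\psi_{21,\ldots,24}}$ is both unnecessary and problematic ($\ket{\phi(2)}$ cannot teleport a qutrit unless the support is already confined to two levels): in the paper these four pairs end up tagged on both $b_1$ and $c_1$ after Step 1, so each reduces to discriminating two orthogonal pure states, which is done by LOCC \`a la Walgate using only the first copies that were already spent.
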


The proof of Proposition~\ref{pro:distinguish333} is given in Appendix~\ref{appendix:dis333}. The protocol in Proposition~\ref{pro:distinguish333}  consumes $\frac{5}{2}$ ebits entanglement resource, which is less than the protocol in Proposition~\ref{pro:distinguish39}. In Ref.~\cite{Sumit2019Genuinely}, the authors investigated entanglement-assisted discrimination of a strongly nonlocal orthogonal product bases (SNOPB) in $ 3\otimes 3\otimes 3$ (which  is from Ref.~\cite{Halder2019Strong}). They showed that this SNOPB can be locally distinguished by using $\{(1,\ket{\phi(3)})_{A,B}, (1,\ket{\phi(2)})_{A,C},(0,\ket{\phi(2)})_{B,C}\}$. It used the teleportation-based protocol between Alice and Bob. This protocol  consumes  $1+\log_23$ ebits entanglement resource, which is less than the protocol in Proposition~\ref{pro:distinguish39}. Moreover, they also give a protocol when the three parties are separated. They showed that this SNOPB can be locally distinguished by using $\{(1,\ket{\phi(3)})_{A,B}, (1,\ket{\phi(2)})_{A,C},(\frac{8}{27},\ket{\phi(2)})_{B,C}\}$. This protocol also consumes less entanglement resource than the protocol in Proposition~\ref{pro:distinguish333}. We find that using the configuration of their entanglement resource, we cannot obtain a perfectly discrimination protocol for our SNOES in $ 3\otimes 3\otimes 3$. Thus, entanglement-assisted discrimination of SNOES may consume more entanglement resource than that  entanglement-assisted discrimination of SNOPB in $ 3\otimes 3\otimes 3$.  It means that entanglement can increase the difficulty to locally distinguish orthogonal states.

	% The reason maybe that any three states of Bell basis in $ 2\otimes 2$ cannot be locally distinguished \cite{ghosh2001distinguishability}, while any three  bipartite orthogonal product states can be locally distinguished.
\section{conclusion}
\label{sec:con}
We have constructed an SNOES of size $d^3-d$  in $ d\otimes d \otimes d$ when $d\geq 3$ is odd,  and an SNOES  of size $d^3-d-6$  in $ d\otimes d \otimes d$ when $d\geq 3$ is even. We have extended these SNOESs to SNOEBs, and it answers an open question in Ref.~\cite{Halder2019Strong}. We have also given two entanglement-assisted discrimination protocols for the SNOES in $ 3\otimes 3\otimes 3$. Each protocol consumes less entanglement resource than the teleportation-based protocol averagely. Our results show the phenomenon of strong quantum nonlocality with entanglement. There are some interesting problems left.  We don't know whether three qubit SNOESs exist. Another problem is how to  generalize the construction in $( d)^{\otimes n}$ for any $d\geq 2$ and $n\geq 4$.
		
\section*{Acknowledgments}
\label{sec:ack}	
FS and XZ were supported by NSFC under Grant No. 11771419,  the Fundamental Research Funds for the Central Universities, and Anhui Initiative in Quantum Information Technologies under Grant No. AHY150200. LC and MH were supported by the  NNSF of China (Grant No. 11871089), and the Fundamental Research Funds for the Central Universities (Grant No. ZG216S2005).

\appendix
\section{A lemma of linear algebra}\label{appendix:linear}
The following  lemma of linear algebra is frequently used in Appendix~\ref{appendix:lem444} and Appendix~\ref{appendix:lemddd}.
\begin{lemma}\label{lem:varder}
	Let $w_n=e^{\frac{2\pi i}{n}}$. If
	\begin{equation}
	\left\{\begin{split}
	&x_{1}+x_{2}+\ldots+x_{n}=0,\\
	&x_{1}+w_nx_{2}+\ldots+w_n^{(n-1)}x_{n}=0,\\
	&x_{1}+w_n^2x_{2}+\ldots+w_n^{2(n-1)}x_{n}=0,\\
	&\ldots\\
	&x_{1}+w_n^{n-1}x_{2}+\ldots+w_n^{(n-1)(n-1)}x_{n}=0,\\
	\end{split}
	\right.
	\end{equation}
	then $x_1=x_2=\ldots=x_n=0.$	
	If
	\begin{equation}
	\left\{\begin{split}
	&x_{1}+w_nx_{2}+\ldots+w_n^{(n-1)}x_{n}=0,\\
	&x_{1}+w_n^2x_{2}+\ldots+w_n^{2(n-1)}x_{n}=0,\\
	&\ldots\\
	&x_{1}+w_n^{n-1}x_{2}+\ldots+w_n^{(n-1)(n-1)}x_{n}=0,\\
	\end{split}
	\right.
	\end{equation}
	then $x_1=x_2=\ldots=x_n$.	
\end{lemma}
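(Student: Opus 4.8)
The plan is to recognize both systems as statements about the discrete Fourier transform of the vector $x=(x_1,\dots,x_n)$, whose coefficient matrix is the $n\times n$ Vandermonde matrix $V$ built from the $n$-th roots of unity. Writing $V_{k,j}=w_n^{kj}$ for $k,j\in\{0,1,\dots,n-1\}$, and indexing the unknown $x_{j+1}$ by $j$, the $k$-th equation of the first system is exactly $(Vx)_k=\sum_{j=0}^{n-1}w_n^{kj}x_{j+1}=0$. The single identity doing all the work is the geometric-sum relation $\sum_{j=0}^{n-1}w_n^{j\ell}=0$ whenever $\ell\not\equiv 0\pmod n$, which follows since the ratio $w_n^{\ell}\neq 1$ and $w_n^{n\ell}=1$.

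First I would establish that $V$ is invertible by exhibiting its inverse explicitly. Let $W$ be the matrix with entries $W_{j,k}=\tfrac1n w_n^{-jk}$. Then $(VW)_{k,k'}=\tfrac1n\sum_{j=0}^{n-1}w_n^{j(k-k')}$, which equals $1$ when $k=k'$ and $0$ otherwise by the geometric-sum identity above; hence $VW=I$ and $W=V^{-1}$. (Alternatively one may simply cite the nonvanishing Vandermonde determinant $\prod_{0\le a<b\le n-1}(w_n^{b}-w_n^{a})$, using that the nodes $1,w_n,\dots,w_n^{n-1}$ are distinct.) With $V$ invertible, the first system reads $Vx=0$, so $x=V^{-1}(Vx)=0$, giving $x_1=\dots=x_n=0$ and settling the first part.

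For the second part I would read the $n-1$ equations as saying that all Fourier coefficients vanish for $k=1,\dots,n-1$, while the zeroth coefficient $(Vx)_0=x_1+\dots+x_n$ is left unconstrained. Applying the inversion formula recovered above, $x_{j+1}=\sum_{k=0}^{n-1}W_{j,k}(Vx)_k=\tfrac1n\sum_{k=0}^{n-1}w_n^{-jk}(Vx)_k=\tfrac1n(Vx)_0$, since every term with $k\ge 1$ drops out and $w_n^{0}=1$. The right-hand side is independent of $j$, so $x_1=x_2=\dots=x_n$, as claimed. As a consistency check one verifies that the constant vector is genuinely a solution: substituting $x_1=\dots=x_n=c$ into the $k$-th equation yields $c\sum_{j=0}^{n-1}w_n^{kj}=0$ for $1\le k\le n-1$ by the same identity.

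There is no substantive obstacle here; the only points requiring care are the bookkeeping of the index shift between the variables $x_1,\dots,x_n$ and the exponents $0,\dots,n-1$, and the verification of the geometric-sum identity, which simultaneously underlies the invertibility of $V$ and the inversion formula. Once that identity is in hand, both conclusions are immediate consequences of the discrete Fourier transform being a bijection of $\bbC^n$.
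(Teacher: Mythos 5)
Your proof is correct, but note that the paper offers no proof of this lemma at all: it is stated in Appendix~\ref{appendix:linear} as a standard fact of linear algebra and used as a black box, so there is no authorial argument to compare against. Your discrete-Fourier/Vandermonde argument (explicit inverse $W_{j,k}=\frac{1}{n}w_n^{-jk}$ via the geometric-sum identity, then $x=V^{-1}(Vx)=0$ for the first system and $x_{j+1}=\frac{1}{n}(Vx)_0$ independent of $j$ for the second) is exactly the canonical justification the authors implicitly rely on, and it handles the only delicate point correctly, namely that for the second system the missing $k=0$ equation leaves precisely the constant vectors as solutions.
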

\section{The proof of Lemma~\ref{lem:strong444}}\label{appendix:lem444}	
\begin{figure*}[t]
	\centering
	\includegraphics[scale=0.9]{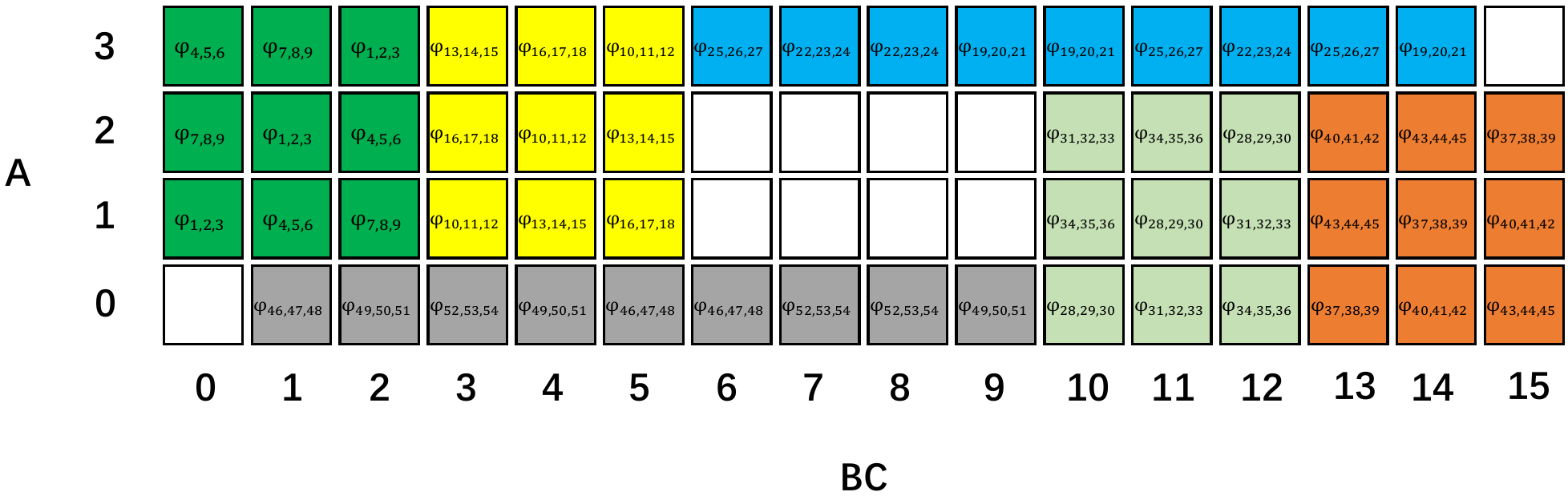}
	\caption{The corresponding $4\times 16$ grid of Eq.~(\ref{eq:tile416}).  For example, $\ket{\varphi_{1,2,3}}$ correspond to the cell set $\{(1,0),(2,1),(3,2)\}$.  }  \label{Figure:tite416}
\end{figure*}

\begin{proof}
	We only need to show that  $\{\ket{\psi_k}\}_{k=1}^{54}$ given by Eq.~(\ref{eq:tile444}) is strongly nonlocal. Since Fig.~\ref{Figure:tite444} is symmetric, we only need to consider $A|BC$ bipartition.  Define a bijection from the basis  $\{\ket{p,q}\}_{p,q=0}^{3}$ in $ 4\otimes 4$ to the  basis in $\bbC^{16}$ as follows: $\ket{0,0}\rightarrow\ket{0}$, $\ket{0,1}\rightarrow\ket{1}$, $\ket{0,2}\rightarrow\ket{2}$, $\ket{0,3}\rightarrow\ket{3}$, $\ket{1,0}\rightarrow\ket{10}$, $\ket{1,1}\rightarrow\ket{7}$,  $\ket{1,2}\rightarrow\ket{6}$, $\ket{1,3}\rightarrow\ket{4}$, $\ket{2,0}\rightarrow\ket{11}$, $\ket{2,1}\rightarrow\ket{9}$, $\ket{2,2}\rightarrow\ket{8}$, $\ket{2,3}\rightarrow\ket{5}$, $\ket{3,0}\rightarrow\ket{12}$, $\ket{3,1}\rightarrow\ket{13}$, $\ket{3,2}\rightarrow\ket{14}$, $\ket{3,3}\rightarrow\ket{15}$. Then we rewrite the set of states $\{\ket{\psi_k}\}_{k=1}^{54}$ in $ 4\otimes 4\otimes 4$ as $\{\ket{\varphi_k}\}_{k=1}^{54}$ in $ 4\otimes {16}$,
	\begin{equation}\label{eq:tile416}
	\begin{aligned}
	\ket{\varphi_{1,2,3}}&=\ket{1,0}+w_3^s\ket{2,1}+w_3^{2s}\ket{3,2},    \\ \ket{\varphi_{4,5,6}}&=\ket{1,1}+w_3^s\ket{2,2}+w_3^{2s}\ket{3,0},   \\
	\ket{\varphi_{7,8,9}}&=\ket{1,2}+w_3^s\ket{2,0}+w_3^{2s}\ket{3,1},  \\
	\ket{\varphi_{10,11,12}}&=\ket{1,3}+w_3^s\ket{2,4}+w_3^{2s}\ket{3,5},   \\
	\ket{\varphi_{13,14,15}}&=\ket{1,4}+w_3^s\ket{2,5}+w_3^{2s}\ket{3,3}, \\
	\ket{\varphi_{16,17,18}}&=\ket{1,5}+w_3^s\ket{2,3}+w_3^{2s}\ket{3,4}, \\
	\ket{\varphi_{19,20,21}}&=\ket{3,10}+w_3^s\ket{3,9}+w_3^{2s}\ket{3,14},\\  
	\ket{\varphi_{22,23,24}}&=\ket{3,7}+w_3^s\ket{3,8}+w_3^{2s}\ket{3,12},     \\
	\ket{\varphi_{25,26,27}}&=\ket{3,6}+w_3^s\ket{3,11}+w_3^{2s}\ket{3,13},\\
	\ket{\varphi_{28,29,30}}&=\ket{0,10}+w_3^s\ket{1,11}+w_3^{2s}\ket{2,12},  \\
	\ket{\varphi_{31,32,33}}&=\ket{0,11}+w_3^s\ket{1,12}+w_3^{2s}\ket{2,10}, \\
	\ket{\varphi_{34,35,36}}&=\ket{0,12}+w_3^s\ket{1,10}+w_3^{2s}\ket{2,11},  \\
	\ket{\varphi_{37,38,39}}&=\ket{0,13}+w_3^s\ket{1,14}+w_3^{2s}\ket{2,15},  \\
	\ket{\varphi_{40,41,42}}&=\ket{0,14}+w_3^s\ket{1,15}+w_3^{2s}\ket{2,13},  \\
	\ket{\varphi_{43,44,45}}&=\ket{0,15}+w_3^s\ket{1,13}+w_3^{2s}\ket{2,14}, \\ 
	\ket{\varphi_{46,47,48}}&=\ket{0,1}+w_3^s\ket{0,6}+w_3^{2s}\ket{0,5},    \\
	\ket{\varphi_{49,50,51}}&=\ket{0,2}+w_3^s\ket{0,4}+w_3^{2s}\ket{0,9},   \\
	\ket{\varphi_{52,53,54}}&=\ket{0,3}+w_3^s\ket{0,7}+w_3^{2s}\ket{0,8},     
	\end{aligned}
\end{equation}
	where $s=0,1,2$. Eq.~(\ref{eq:tile416})  corresponds to the $4\times 16$ grid in Fig.~\ref{Figure:tite416}. For example, $\ket{\varphi_{1,2,3}}$ corresponds to the cell set $\{(1,0),(2,1),(3,2)\}$. 	We need to show that  $\{\ket{\varphi_k}\}_{k=1}^{54}$ given by Eq.~(\ref{eq:tile416}) is locally irreducible.

\begin{figure*}[t]
	\centering
	\includegraphics[scale=0.6]{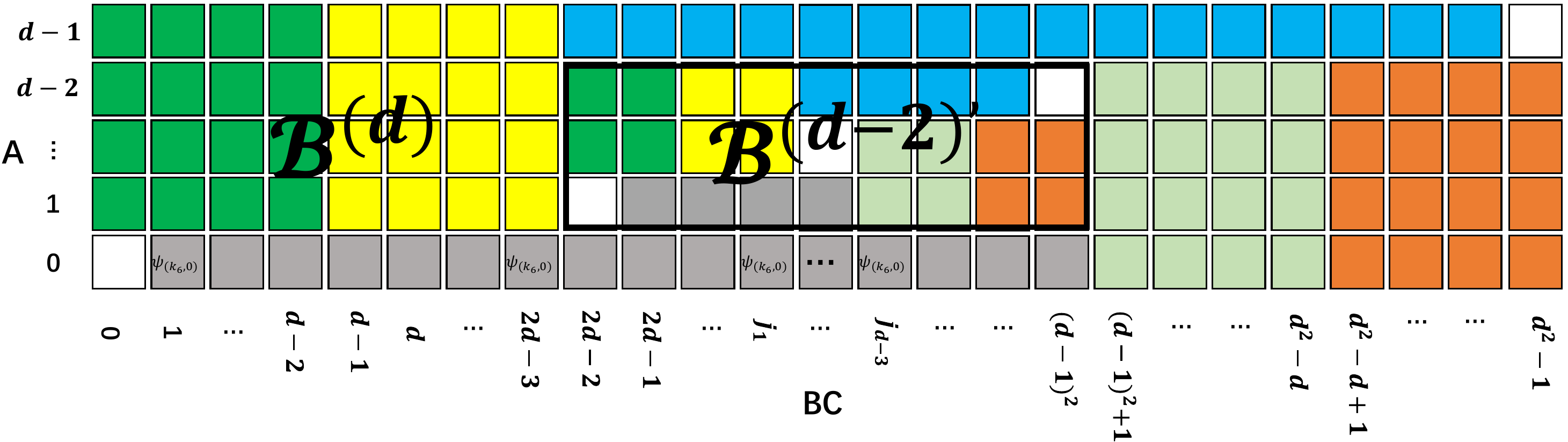}
	\caption{The corresponding $d\times d^2$ grid of $\cB^{(d)}$ given by Eq.~(\ref{eq:tileddd}) across $A|BC$ bipartition.  Define a bijection, $\ket{j}\rightarrow\ket{j+1}$ for $0\leq j\leq d-3$, then $\cB^{(d-2)}$  given by Eq.~(\ref{eq:tileddd}) is transferred into $\cB^{(d-2)'}$, and $\cB^{(d-2)'}$ across $A|BC$ bipartition corresponds to the inside $(d-2)\times (d-2)^2$ grid. Note that $\{\ket{\psi_{(k_6,0)}}\}_{k_6=0}^{d-2}$  given by Eq.~(\ref{eq:tileddd}) across $A|BC$ bipartition  corresponds to the $(d-1)$ cells in $\{(0,1),(0,j_1),\ldots, (0,j_{d-3}),(0,2d-3)\}$, where $2d-2\leq j_1\neq \cdots \neq j_{d-3}\leq (d-1)^2$. } \label{Figure:ddsquare}
\end{figure*}

	Let Alice go first and start the orthogonality-preserving POVM, $E_m=M_1^\dagger M_1=(a_{\ell,j})_{0\leq \ell,j\leq 3}$. Then the states of $\{M_1\otimes I\ket{\varphi_{k}}\}_{k=1}^{54}$ are mutually orthogonal. In order to show that the off-diagonal elements of $E_m$ are all zero, we need to choose the cells with same column index. For the same discussion as Lemma~\ref{lem:strong}, we obtain $a_{0,1}=a_{0,2}=a_{0,3}=0$ by  $(0,1)$ and $\{(1,1),(2,1),(3,1)\}$. If we choose $\{(1,0),(2,1),(3,2)\}$ and $\{(2,0),(3,1),(1,2)\}$, we can obtain $a_{1,2}=a_{2,3}=a_{3,1}=0$ by Lemma~\ref{lem:varder}. Thus the off-diagonal elements of $E_m$ are all zero.  For diagonal elements of $E_m$, we choose $\{(1,0),(2,1),(3,2)\}$. Then we obtain $a_{1,1}=a_{2,2}=a_{3,3}$ by Lemma~\ref{lem:varder}. We can also obtain $a_{0,0}=a_{1,1}=a_{2,2}$ by $\{(0,10),(1,11),(2,12)\}$. Thus the diagonal elements of $E_m$ are all equal. It means that $E_m$ is proportional to the identity matrix, and Alice cannot go first.
	
	Let Bob and Charlie go first and start the orthogonality-preserving POVM,  $E_m'=M_2^\dagger M_2=(b_{\ell,j})_{0\leq \ell,j\leq 15}$.  Then the states $\{I\otimes M_2\ket{\varphi_{k}}\}_{k=1}^{54}$ are mutually orthogonal.  Since Fig.~\ref{Figure:tite416} is centrosymmetric,  $b_{\ell,j}=0$ can implies that $b_{15-\ell,15-j}=0$ for any $\ell\neq j$. Next, since Fig.~\ref{Figure:tite416} has the similar structure as Fig.~\ref{Figure:tite39}, we obtain $b_{\ell,j}=0$ for $0\leq \ell\leq 5$ and $\ell+1\leq j\leq 15$ by the same discussion as Lemma~\ref{lem:strong}. We only need to  calculate $b_{6,7},b_{6,8},b_{6,9},b_{7,8}$. In Fig.~\ref{Figure:tite416}, we can choose $\{(0,1),(0,5),(0,6)\}$ and $\{(0,3),(0,7),(0,8)\}$. Since $b_{1,3}=b_{1,7}=b_{1,8}=b_{5,3}=b_{5,7}=b_{5,8}=b_{6,3}=0$, we obtain $b_{6,7}=b_{6,8}=0$ by Lemma~\ref{lem:varder}. We can also obtain $b_{6,9}=0$ by $\{(0,1),(0,5),(0,6)\}$ and $\{(0,2),(0,4),(0,9)\}$. Further, since  $\bra{\psi_{52}}I\otimes E_m'\ket{\psi_{53}}=\bra{\psi_{53}}I\otimes E_m'\ket{\psi_{54}}=\bra{\psi_{52}}I\otimes E_m'\ket{\psi_{54}}=\bra{\psi_{53}}I\otimes E_m'\ket{\psi_{52}}=0$, we have
	\begin{equation}
	\left\{\begin{split}
	&a_{3,3}+w_3a_{7,7}+w_3^2a_{7,8}+w_3a_{8,7}+w_3^2a_{8,8}=0,\\
	&a_{3,3}+w_3a_{7,7}+a_{7,8}+a_{8,7}+w_3^2a_{8,8}=0,\\
	&a_{3,3}+w_3^2a_{7,7}+w_3a_{7,8}+w_3^2a_{8,7}+w_3a_{8,8}=0,\\
	&a_{3,3}+w_3^2a_{7,7}+w_3^2a_{7,8}+w_3a_{8,7}+w_3a_{8,8}=0.
	\end{split}
	\right.
	\end{equation}
	Then, we have $w_3^2a_{7,8}+w_3a_{8,7}=a_{7,8}+a_{8,7}$, and $w_3a_{7,8}+w_3^2a_{8,7}=w_3^2a_{7,8}+w_3a_{8,7}$.  It implies $a_{7,8}=a_{8,7}=0$. Thus  the off-diagonal elements of $E_m'$ are all zero.  For diagonal elements of $E_m'$, if we choose  $\{(1,0),(2,1),(3,2)\}$,  it implies $b_{0,0}=b_{1,1}=b_{2,2}$ by Lemma~\ref{lem:varder}.  In the same way, we obtain $b_{3,3}=b_{4,4}=b_{5,5}$ by $\{(1,3),(2,4),(3,5)\}$, $b_{1,1}=b_{5,5}=b_{6,6}$ by $\{(0,1),(0,5),(0,6)\}$, and $b_{3,3}=b_{7,7}=b_{8,8}$ by $\{(0,3),(0,7),(0,8)\}$. It implies $b_{0,0}=b_{1,1}=b_{2,2}=b_{3,3}=b_{4,4}=b_{5,5}=b_{6,6}=b_{7,7}=b_{8,8}$.  Since Fig.~\ref{Figure:tite416} is centrosymmetric, we also obtain $b_{7,7}=b_{8,8}=b_{9,9}=b_{10,10}=b_{11,11}=b_{12,12}=b_{13,13}=b_{14,14}=b_{15,15}$. Thus the diagonal elements of $E_m'$ are all equal. It means that $E_m'$ is proportional to the identity matrix, and Bob and Charlie cannot go first.
	
	Thus, the orthogonal entangled set $\{\ket{\psi_k}\}_{k=1}^{54}$ given by Eq.~(\ref{eq:tile444}) is strongly nonlocal. 	
\end{proof}

\section{The proof of Theorem~\ref{thm:generalddd}}\label{appendix:lemddd}	

\begin{proof}
	(i) We prove it by induction on $d$.  Assume  $\cB^{(d-2)}$ given by Eq.~(\ref{eq:tileddd}) is strongly nonlocal when $d\geq 5$. Define a bijection, $\ket{j}\rightarrow\ket{j+1}$ for $0\leq j\leq d-3$, then $\cB^{(d-2)}$ is mapped to a set $B^{(d-2)'}\subset B^{(d)}$ in $d\otimes d \otimes d$. Obviously, $\cB^{(d-2)'}$ is also strongly nonlocal.  Consider the $A|BC$ bipartition of $B^{(d)}$ in Eq.~(\ref{eq:tileddd}), then it corresponds to the $d\times d^2$ grid in Fig.~\ref{Figure:ddsquare} by some permutations and bijections. Since  $\cB^{(d-2)'}\subset\cB^{(d)}$, then $\cB^{(d-2)'}$ across $A|BC$ bipartition corresponds to the  $(d-2)\times (d-2)^2$ grid in  Fig.~\ref{Figure:ddsquare}. For example, if $d=5$, then $\cB^{(3)'}$ corresponds to the $3\times 9$ grid in Fig.~\ref{Figure:ddsquare} (see also Fig.~\ref{Figure:tite39}). For the same discussion as Lemma~\ref{lem:strong} and Lemma~\ref{lem:strong444}, we can show that Alice cannot go first. Let Bob and Charlie go first and start the orthogonality-preserving POVM,  $E_m'=M_2^\dagger M_2=(b_{\ell,j})_{0\leq \ell,j\leq d^2-1}$. By
	the induction hypothesis, we can obtain  $b_{\ell,j}=0$ for $2d-2\leq \ell\leq (d-1)^2-1$ and $\ell+1\leq j\leq (d-1)^2$, and $b_{2d-2,2d-2}=b_{2d-1,2d-1}=\cdots=b_{(d-1)^2,(d-1)^2}$. By the same discussion as Lemma~\ref{lem:strong}, we can show that the off-diagonal elements of $E_m'$ are all zero. For the diagonal elements of $E_m'$,   we know that $b_{0,0}=b_{1,1}\cdots= b_{d-2,d-2}$, and $b_{d-1,d-1}=b_{d,d}\cdots=b_{2d-3,2d-3}$ by using $\{(1,0),(2,1),\ldots,(d-1,d-2)\}$ and $\{(1,d-1),(2,d),\ldots,(d-1,2d-3)\}$ in Fig.~\ref{Figure:ddsquare}  and Lemma~\ref{lem:varder}.  Further, the $(d-1)$ states $\{\ket{\psi_{(k_6,0)}}\}_{k_6=0}^{d-2}$  given by Eq.~(\ref{eq:tileddd}) across $A|BC$ bipartition  correspond to the $(d-1)$ cells in $\{(0,1),(0,j_1),\ldots, (0,j_{d-3}),(0,2d-3)\}$  in Fig.~\ref{Figure:ddsquare}, where $2d-2\leq j_1\neq \cdots \neq j_{d-3}\leq (d-1)^2$. Then we obtain  $b_{1,1}=b_{j_1,j_1}=\cdots=b_{j_{d-3},j_{d-3}}=b_{2d-3,2d-3}$ by Lemma~\ref{lem:varder}. It implies $b_{1,1}=b_{2,2}=\cdots=b_{(d-1)^2,(d-1)^2}$.  Since Fig.~\ref{Figure:ddsquare} is centrosymmetric, the diagonal elements of $E_m'$ are all equal.  It means that $E_m'$ is proportional to the identity matrix, and hence Bob and Charlie cannot go first. We obtain that $\cB^{(d)}$ is also strongly nonlocal.
	
	(ii) The proof is the same as (i).
\end{proof}

\section{The proof of Proposition~\ref{pro:distinguish333}}\label{appendix:dis333}
\begin{proof}
	First, two $\ket{\phi(2)}$ are distributed between Alice and Bob, two $\ket{\phi(2)}$ are distributed  Alice and Charlie, and one $\ket{\phi(2)}$ are distributed between Bob and Charlie. The initial state is
	\begin{equation}
	\begin{aligned}
	&\ket{\psi}_{A,B,C}\otimes\ket{\phi(2)}_{a_1,b_1}\otimes\ket{\phi(2)}_{a_2,c_1}\\
	&\otimes\ket{\phi(2)}_{a_3,c_2}\otimes\ket{\phi(2)}_{b_2,c_3}\otimes\ket{\phi(2)}_{a_4,b_3},
	\end{aligned}
	\end{equation}
	where $a_1$, $a_2$, $a_3$, and $a_4$  are the ancillary systems of Alice, $b_1$, $b_2$ and $b_3$ are the ancillary systems  of Bob, and $c_1$, $c_2$ and $c_3$ are the ancillary systems  of Charlie. Now the discrimination protocol proceeds as follows.
	
	\textit{Step 1}.   Bob performs the measurement $\{M_1:=P[\ket{0}_B;\ket{0}_{b_1}]+P[(\ket{1},\ket{2})_B;\ket{1}_{b_1}], \overline{M_1}:=I-M_1\}$. Charlie performs the measurement $\{M_2:=P[(\ket{0},\ket{1})_C;\ket{0}_{c_1}]+P[\ket{2}_C;\ket{1}_{c_1}], \overline{M_2}:=I-M_2\}$. If $M_1$ and $M_2$ click, the resulting postmeasurement states are
	\begin{widetext}
		\begin{equation*}
		\begin{aligned}
		\ket{\psi_{1,2}}&\rightarrow(\ket{1,0,0}\pm\ket{2,0,1})_{A,B,C}\ket{0,0}_{a_1,b_1}\ket{0,0}_{a_2,c_1}\ket{\phi(2)}_{a_3,c_2}\ket{\phi(2)}_{b_2,c_3}\ket{\phi(2)}_{a_4,b_3},  \\  \ket{\psi_{3,4}}&\rightarrow(\ket{1,0,1}\pm\ket{2,0,0})_{A,B,C}\ket{0,0}_{a_1,b_1}\ket{0,0}_{a_2,c_1}\ket{\phi(2)}_{a_3,c_2}\ket{\phi(2)}_{b_2,c_3}\ket{\phi(2)}_{a_4,b_3},    \\
		\ket{\psi_{5,6}}&\rightarrow(\ket{1,0,2}_{A,B,C}\ket{0,0}_{a_1,b_1}\ket{1,1}_{a_2,c_1}\pm\ket{2,1,2}_{A,B,C}\ket{1,1}_{a_1,b_1}\ket{1,1}_{a_2,c_1})\ket{\phi(2)}_{a_3,c_2}\ket{\phi(2)}_{b_2,c_3}\ket{\phi(2)}_{a_4,b_3},    \\
		\ket{\psi_{7,8}}&\rightarrow(\ket{1,1,2}_{A,B,C}\ket{1,1}_{a_1,b_1}\ket{1,1}_{a_2,c_1}\pm\ket{2,0,2}_{A,B,C}\ket{0,0}_{a_1,b_1}\ket{1,1}_{a_2,c_1})\ket{\phi(2)}_{a_3,c_2}\ket{\phi(2)}_{b_2,c_3}\ket{\phi(2)}_{a_4,b_3},  \\
		\ket{\psi_{9,10}}&\rightarrow(\ket{2,1,0}\pm\ket{2,2,1})_{A,B,C}\ket{1,1}_{a_1,b_1}\ket{0,0}_{a_2,c_1}\ket{\phi(2)}_{a_3,c_2}\ket{\phi(2)}_{b_2,c_3}\ket{\phi(2)}_{a_4,b_3},   \\
		\ket{\psi_{11,12}}&\rightarrow(\ket{2,1,1}\pm\ket{2,2,0})_{A,B,C}\ket{1,1}_{a_1,b_1}\ket{0,0}_{a_2,c_1}\ket{\phi(2)}_{a_3,c_2}\ket{\phi(2)}_{b_2,c_3}\ket{\phi(2)}_{a_4,b_3},    \\
		\ket{\psi_{13,14}}&\rightarrow(\ket{0,1,0}\pm\ket{1,2,0})_{A,B,C}\ket{1,1}_{a_1,b_1}\ket{0,0}_{a_2,c_1}\ket{\phi(2)}_{a_3,c_2}\ket{\phi(2)}_{b_2,c_3}\ket{\phi(2)}_{a_4,b_3},   \\
		\ket{\psi_{15,16}}&\rightarrow(\ket{0,2,0}\pm\ket{1,1,0})_{A,B,C}\ket{1,1}_{a_1,b_1}\ket{0,0}_{a_2,c_1}\ket{\phi(2)}_{a_3,c_2}\ket{\phi(2)}_{b_2,c_3}\ket{\phi(2)}_{a_4,b_3}.\\
		\ket{\psi_{17,18}}&\rightarrow(\ket{0,2,1}_{A,B,C}\ket{1,1}_{a_1,b_1}\ket{0,0}_{a_2,c_1}\pm\ket{1,2,2}_{A,B,C}\ket{1,1}_{a_1,b_1}\ket{1,1}_{a_2,c_1})\ket{\phi(2)}_{a_3,c_2}\ket{\phi(2)}_{b_2,c_3}\ket{\phi(2)}_{a_4,b_3},  
		\end{aligned}
		\end{equation*}
		\begin{align}
		\notag
		\ket{\psi_{19,20}}&\rightarrow(\ket{0,2,2}_{A,B,C}\ket{1,1}_{a_1,b_1}\ket{1,1}_{a_2,c_1}\pm\ket{1,2,1}_{A,B,C}\ket{1,1}_{a_1,b_1}\ket{0,0}_{a_2,c_1})\ket{\phi(2)}_{a_3,c_2}\ket{\phi(2)}_{b_2,c_3}\ket{\phi(2)}_{a_4,b_3},\\
		\notag
		\ket{\psi_{21,22}}&\rightarrow(\ket{0,0,1}_{A,B,C}\ket{0,0}_{a_1,b_1}\ket{0,0}_{a_2,c_1}\pm\ket{0,1,2}_{A,B,C}\ket{1,1}_{a_1,b_1}\ket{1,1}_{a_2,c_1})\ket{\phi(2)}_{a_3,c_2}\ket{\phi(2)}_{b_2,c_3}\ket{\phi(2)}_{a_4,b_3},  \\
		\ket{\psi_{23,24}}&\rightarrow(\ket{0,0,2}_{A,B,C}\ket{0,0}_{a_1,b_1}\ket{1,1}_{a_2,c_1}\pm\ket{0,1,1}_{A,B,C}\ket{1,1}_{a_1,b_1}\ket{0,0}_{a_2,c_1})\ket{\phi(2)}_{a_3,c_2}\ket{\phi(2)}_{b_2,c_3}\ket{\phi(2)}_{a_4,b_3}.
		\end{align}
	\end{widetext}

	\textit{Step 2}.  Alice performs the measurement $\{M_{2,1}:=P[(\ket{1},\ket{2})_A;\ket{0}_{a_1};\ket{0}_{a_2}], M_{2,2}:=P[\ket{2}_A;\ket{1}_{a_1};\ket{0}_{a_2}], \overline{M_2}:=I-M_{2,1}-M_{2,2}\}$. If $M_{2,1}$ clicks, it remains $\ket{\psi_{1,2,3,4}}$. These four states can be locally distinguished by using $\ket{\phi(2)}_{a_3,c_2}$. If $M_{2,2}$ clicks, it remains $\ket{\psi_{9,10,11,12}}$. These four states can also be locally distinguished by using $\ket{\phi(2)}_{b_2,c_3}$; if $\overline{M_2}$ clicks, it remains $\{\ket{\psi_k}_{k=5}^{8}\bigcup\ket{\psi_k}_{k=13}^{24}\}$.
	
	\textit{Step 3}.  Charlie performs the measurement $\{M_{3}:=P[\ket{0}_C],  \overline{M_3}:=I-M_3\}$. If $M_{3}$ clicks, it remains $\ket{\psi_{13,14,15,16}}$. These four states can be locally distinguished by using $\ket{\phi(2)}_{a_4,b_3}$; if $\overline{M_3}$ clicks, it remains $\{\ket{\psi_k}_{k=5}^{8}\bigcup\ket{\psi_k}_{k=17}^{24}\}$。
	
	\textit{Step 4}.  Bob performs the measurement $\{M_{4}:=P[\ket{2}_B],  \overline{M_4}:=I-M_4\}$. If $M_{4}$ clicks, it remains $\ket{\psi_{17,18,19,20}}$. These four states can be locally distinguished by the similar protocol as Example~\ref{example:belldisting}; if $\overline{M_4}$ clicks, it remains $\{\ket{\psi_k}_{k=5}^{8}\bigcup\ket{\psi_k}_{k=21}^{24}\}$.
	
	\textit{Step 5}.  Alice performs the measurement $\{M_{5,1}:=P[\ket{1}_A;\ket{0}_{a_1};\ket{1}_{a_2}]+P[\ket{2}_A;\ket{1}_{a_1};\ket{1}_{a_2}], M_{5,2}:=P[\ket{1}_A;\ket{1}_{a_1};\ket{1}_{a_2}]+P[\ket{2}_A;\ket{0}_{a_1};\ket{1}_{a_2}],
	M_{5,3}:=P[\ket{0}_A;\ket{0}_{a_1};\ket{0}_{a_2}]+P[\ket{0}_A;\ket{1}_{a_1};\ket{1}_{a_2}], \overline{M_5}:=I-M_{5,1}-M_{5,2}-M_{5,3}\}$. If $M_{5,1}$ clicks, it remains $\ket{\psi_{5,6}}$; if $M_{5,2}$ clicks, it remains $\ket{\psi_{7,8}}$; if $M_{5,3}$ clicks, it remains $\ket{\psi_{21,22}}$;   if $\overline{M_5}$ clicks, it remains $\ket{\psi_{23,24}}$.
	
	All other cases in step 1 obtain a similar protocol. From the beginning to Step 5, it consumes $1+\frac{4}{24}=\frac{7}{6}$ $\ket{\phi(2)}$  between Alice and Bob, $1+\frac{4}{24}=\frac{7}{6}$  $\ket{\phi(2)}$ between Alice and Charlie, and  $\frac{4}{24}=\frac{1}{6}$  $\ket{\phi(2)}$  between Bob and Charlie averagely.
\end{proof}
\vspace{0.5cm}

\vspace{0.4cm}	
\bibliographystyle{IEEEtran}
\bibliography{reference}
\end{document}